\documentclass[10pt,journal,compsoc]{IEEEtran}

\usepackage{graphicx}
%

\usepackage[T1]{fontenc}
\usepackage{tikz}
\usepackage{amsmath, amssymb}
\usepackage{stmaryrd}
\usepackage{hyperref}
\usepackage{algorithm}
\usepackage{algpseudocode}
\usepackage{subcaption}
\usepackage[font={small,it}]{caption}
\usepackage{enumitem}
\usepackage{mathtools}
\usepackage{amsthm}

\usepackage{booktabs, multirow} 

\usepackage{pifont}

\ifCLASSOPTIONcompsoc
  \usepackage[nocompress]{cite}
\else
  \usepackage{cite}
\fi

\usetikzlibrary{automata, positioning, arrows}



\newcommand{\outputlang}[2]{\mathit{Out}_{#1}(#2)}

\newcommand{\hoaretriple}[3]{\{#1\}#2\{#3\}}
\usepackage{xspace}


\newcommand{\mypar}[1]{\vspace{1mm}\indent\textit{#1}}

\newtheorem{theorem}{Theorem}[section]
\newtheorem{lemma}[theorem]{Lemma}

\newcommand{\inputtype}{P}
\newcommand{\outputtype}{Q}

\newcommand{\outputbound}{l}
\newcommand{\distancebound}{d}

\newcommand{\DFAlanguage}[1]{\mathcal{L}(#1)}

\newcommand{\srcstate}{q_{\inputtype}} 
\newcommand{\trgstate}{q_{\outputtype}} 
\newcommand{\trnsstate}{q_T} 
\newcommand{\trnsstateTwo}{q'_T} 

\newcommand{\setcustomstate}[1]{Q_{#1}}
\newcommand{\setsrcstate}{Q_{\inputtype}} 
\newcommand{\settrgstate}{Q_{\outputtype}} 
\newcommand{\settrnsstate}{Q_T} 

\newcommand{\initstate}[1]{q^{\mathit{init}}_{#1}}
\newcommand{\finalstates}[1]{F_{#1}}

\newcommand{\transition}[1]{\delta_{#1}}
\newcommand{\deltaTstate}{\delta_{T}^{\mathit{st}}}
\newcommand{\deltaTout}{\delta_{T}^{\mathit{out}}}
\newcommand{\extendeddeltaTstate}{\delta_{T}^{\mathit{st*}}}

\newcommand{\extendeddeltaTout}{\delta_{T}^{\mathit{out*}}}
\newcommand{\extendedtransition}[1]{\delta_{#1}^{*}}

\newcommand{\vardeltastate}{\normalfont{\texttt{d}^\texttt{st}}}
\newcommand{\vardeltaout}{\normalfont{\texttt{d}^\texttt{out}_\texttt{ch}}}
\newcommand{\vardeltaoutlen}{\normalfont{\texttt{d}^\texttt{out}_\texttt{len}}}
\newcommand{\vareddist}{\normalfont{\texttt{ed}}}
\newcommand{\varprogress}{\normalfont{\texttt{en}}}
\newcommand{\varweight}{\normalfont{\texttt{wed}}}
\newcommand{\vardeltaoutputtype}{\normalfont{\texttt{d}_{\outputtype}}}

\newcommand{\invtransition}[1]{\delta_{#1}^{-1}}
\newcommand{\extendedinvtransition}[1]{\delta_{#1}^{-1*}}

\newcommand{\eddistfunc}[2]{\mathit{ed\!\_\!dist}(#1, #2)}
\newcommand{\meaneddist}[2]{\mathit{mean\!\_\!ed\!\_\!dist}(#1, #2)}
\newcommand{\energyfunc}[3]{\varprogress(#1, #2, #3)}

\newcommand{\lenfunc}[1]{\mathit{len}(#1)}

\newcommand{\inputexample}{s}
\newcommand{\outputexample}{t}
\newcommand{\examplemap}{{\inputexample \mapsto \outputexample}}
\newcommand{\exampleset}{[\overline{\inputexample \mapsto \outputexample}]}

\newcommand{\stringfont}[1]{\texttt{#1}}

\newcommand{\stringlength}[1]{\mathit{len}(#1)}

\newcommand{\achar}{\mathit{a}}

\newcommand{\position}{\normalfont{\texttt{config}}}

\newcommand{\triple}[3]{(#1, #2, #3)}
\newcommand{\simulationsym}{\normalfont{\texttt{sim}}}
\newcommand{\simulation}[3]{\simulationsym \triple{#1}{#2}{#3}} 
\newcommand{\lookaheadsimulationsym}{\simulationsym_R}
\newcommand{\lookaheadsimulation}[4]{\lookaheadsimulationsym(#1, #2, #3, #4)}

\newcommand{\sfa}{M}

\newcommand{\minterms}{\mathit{minterms}}
\newcommand{\mintermsshort}{\mathit{mterms}}
\newcommand{\finite}[1]{\mathit{F}(#1)} 
\newcommand{\mintermpred}[1]{mt(#1)} 

\newcommand{\intervalsize}{\mathit{size}}

\newcommand{\witnessshort}[1]{\mathit{wit}(#1)}

\newcommand{\lookaheadaut}{R}
\newcommand{\lookahead}{r} 
\newcommand{\lookaheadstate}{q_{\lookaheadaut}} 
\newcommand{\setlookaheadstate}{Q_{\lookaheadaut}}
\newcommand{\numlookaheadstates}{k_R}

\newcommand{\vardeltalookahead}{\normalfont{\texttt{d}_\texttt{\lookaheadaut}}}
\newcommand{\vardeltainverselookahead}{\normalfont{\texttt{d}^{-1}_\texttt{\lookaheadaut}}}
\newcommand{\varlookahead}{\normalfont{\texttt{look}}}


\newcommand{\bad}{T_{\textit{bad}}}

\newcommand{\badinput}{\inputtype_{\textit{bad}}}

\newcommand{\numsynthbenchmarks}{18}
\newcommand{\numrepairbenchmarks}{12}

\newtheorem{problemstatement}{Problem Statement}

\newcommand{\NA}{---}
\newcommand{\xmark}{\ding{55}\xspace}%

\def\name{\textsc{astra}\xspace}

\newcommand{\rone}{(\emph{i})~}
\newcommand{\rtwo}{(\emph{ii})~}
\newcommand{\rthree}{(\emph{iii})~}

\begin{document}

\title{Synthesizing Transducers from Complex Specifications}

\author{Anvay Grover,
        Ruediger Ehlers,
        and~Loris D'Antoni
\IEEEcompsocitemizethanks{\IEEEcompsocthanksitem A. Grover and L. D'Antoni are with the Univ. of Wisconsin-Madison.
\IEEEcompsocthanksitem R Ehlers is with Clausthal University of Technology.}
\thanks{Manuscript received April 19, 2005; revised August 26, 2015.}}

\markboth{Journal of \LaTeX\ Class Files,~Vol.~14, No.~8, August~2015}%
{Shell \MakeLowercase{\textit{et al.}}: Bare Demo of IEEEtran.cls for Computer Society Journals}

\IEEEtitleabstractindextext{%
\begin{abstract}
Automating string transformations has been one of the killer applications of program synthesis.
Existing synthesizers that solve this problem produce programs in domain-specific languages (DSL) that are engineered to help the synthesizer, and therefore lack nice formal properties.
This limitation prevents the synthesized programs from being used in verification applications (e.g., to check complex pre-post conditions) and makes the synthesizers hard to modify due to their reliance on the given DSL. 

We present a constraint-based approach to synthesizing transducers, a well-studied model with strong closure and decidability properties. Our approach handles three types of specifications: \rone input-output examples, \rtwo input-output types expressed as regular languages, and \rthree input/output distances that bound how many characters the transducer can modify when processing an input string.
Our work is the first to support such complex specifications and it does so by using the algorithmic properties of transducers to generate constraints that can be solved using off-the-shelf SMT solvers.
Our synthesis approach can be extended to many transducer models and it can be used, thanks to closure properties of transducers, to compute repairs for partially correct transducers.
\end{abstract} }

\maketitle

\IEEEdisplaynontitleabstractindextext

%
\IEEEpeerreviewmaketitle

\section{Introduction}
\label{sec:introduction}

String transformations are used in data transformations~\cite{gulwani2011flashfill}, sanitization of untrusted inputs~\cite{Bek2011,LorisVMCAI2013}, and many other domains~\cite{programmable-string-robustness}. 
Because in these domains bugs may cause serious security vulnerabilities~\cite{Bek2011}, there has been increased interest
in building tools that can help programmers verify~\cite{Bek2011,LorisVMCAI2013} and synthesize~\cite{gulwani2011flashfill,symmetriclenses,bijectivelenses} string transformations.

Techniques for \textit{verifying}  string transformations rely on automata-theoretic approaches that provide  powerful decidability properties~\cite{Bek2011}.
On the other hand, techniques for \textit{synthesizing} string transformations rely on domain-specific languages (DSLs)~\cite{gulwani2011flashfill,symmetriclenses}.
These DSLs are designed to make synthesis practical and have to give up the closure and decidability properties enabled by automata-theoretic models. 
The disconnect between the two approaches raises a natural question: \textit{Can one synthesize automata-based models and therefore retain and leverage their elegant properties?}

Transducers are a natural automata-based formal model for synthesizing string transformations.
A \textit{finite state transducer} (FT) is an automaton where each transition reads an input
character and outputs a string of output characters.
For instance, Figure~\ref{fig:escape_quotes} shows a transducer that `escapes' instances of the \texttt{"} character.
So, on input \texttt{a"\textbackslash "a}, the transducer outputs the string \texttt{a\textbackslash "\textbackslash \textbackslash "a}.
FTs have found wide adoption in a variety of domains~\cite{mohri1997finite,LorisVMCAI2013} because of their many desirable properties (e.g., decidable equivalence check and closure under composition~\cite{cacmSFA}).
There has been increasing work on building SMT solvers for strings that support transducers; the Ostrich tool \cite{ostrichATVA2020} allows a user to write programs in SMT where string-transformations are modelled using transducers.
One can then write constraints over such programs and use an SMT solver to automatically check for satisfiability or prove unsatisfiability of those constraints.
For example, given a program like the following:
\begin{verbatim}
y = escapeQuotes(x)
z = escapeQuotes(y)
assert(y==z) //Checking idempotence
\end{verbatim}
one can use Ostrich to write a set of constraints and use them to prove whether the assertion holds.
However, to do so, one needs to first write a transducer $T$ that implements the function \verb|escapeQuotes|.
However, writing transducers by hand is a cumbersome and error-prone task.

\begin{figure}[tb]
  \centering
  \begin{subfigure}{0.5\textwidth}
  \centering
    \resizebox{0.44\textwidth}{0.24\textwidth}{
  \begin{tikzpicture}[shorten >=1pt, node distance=2.5cm, auto] 
    \node[state, accepting, initial] (q0) {$q_0$};
    \node[state, accepting, right of=q0] (q1) {$q_1$};
    \path[->]  
            (q0) edge[loop above] node{\texttt{a} $\rightarrow$ \texttt{a}} (q0)
            
            (q0) edge[loop below] node{\texttt{"} $\rightarrow$ \texttt{\textbackslash "}} (q0)
            
            (q0) edge node[above] {\texttt{\textbackslash} $\rightarrow$ \texttt{\textbackslash}} (q1)
            
            (q1) edge[bend left=55] node{\texttt{a} $\rightarrow$ \texttt{a}} (q0)
            
            (q1) edge[bend left=20] node{\texttt{"} $\rightarrow$ \texttt{"}} (q0)
            
            (q1) edge[bend right=45, above] node {\texttt{\textbackslash} $\rightarrow$ \texttt{\textbackslash}} (q0)
            ;
    \end{tikzpicture}
    }
    \caption{Transducer EscapeQuotes}
    \label{fig:escape_finite_correct}
    \end{subfigure}
    \begin{subfigure}{0.5\textwidth}
    \vspace{1mm}
    \centering
    \fbox{%
    \scriptsize
        \parbox{0.95\textwidth}{%
            \textbf{Examples:}  
            \{\texttt{a"a} $\mapsto$ \texttt{a\textbackslash"a}, \texttt{a\textbackslash \textbackslash a} $\mapsto$ \texttt{a\textbackslash \textbackslash a}, \texttt{a\textbackslash a} $\mapsto$ \texttt{a\textbackslash a}, \texttt{a\textbackslash "a} $\mapsto$ \texttt{a\textbackslash "a}, \texttt{\textbackslash} $\mapsto$ \texttt{\textbackslash}\} \\
            \textbf{Types:} 
            \texttt{[a"]$^\ast$\textbackslash?|([a"]$^\ast$\textbackslash[a"\textbackslash][a"]$^\ast$)$^\ast$} $\to$ 
            \texttt{a$^\ast$\textbackslash?|(a$^\ast$\textbackslash[a"\textbackslash]a$^\ast$)$^\ast$} \\
            \textbf{Distance:} At most 1 edit per input character
        }%
    }
    \caption{Specification to synthesize EscapeQuotes}
    \label{fig:specification_box}
    \end{subfigure}
\caption{Simplified version of EscapeQuotes from~\cite{Bek2011}.}
\label{fig:escape_quotes}    
\end{figure}

In this paper, we present a technique for synthesizing transducers from high-level specifications.
We use three different specification mechanisms to quickly yield desirable transducers: input-output examples, input-output types, and input-output distances.
When provided with the specification in Figure~\ref{fig:specification_box}, our approach yields the  transducer in Figure~\ref{fig:escape_quotes}.
While none of the three specification mechanisms are effective in isolation, they work well altogether.
Input-output examples are a natural specification mechanism that is easy to provide, but only capture finitely many inputs.
Similarly, input-output types are a natural way to prevent a transducer from generating undesired strings and can often be obtained from function/API specifications.
Last, input-output distances are a natural way to specify how much of the input string should be preserved by the transformation.

We show that if the size of the transducers is fixed, all such specifications can be encoded as a set of constraints 
whose solution directly provides a transducer. 
While the constraints for examples are fairly straightforward, we introduce two new ideas for encoding types and distances.
To encode types and distances, we show that one can use constraints to ``guess'' the simulation relation and the invariants necessary to prove that
the transducer has the given type and respects the given distance constraint.

Because our constraint-based approach is based on decision procedures and is  modular, it can support more complex models of transducers:
\rone Symbolic Finite Transducers (s-FTs), which support large alphabets~\cite{LorisCAV17}, and
\rtwo FTs with lookahead, which can express functions that otherwise require non-determinism.
In addition, closure properties of transducers allow us to reduce repair problems for string transformations to our synthesis problem.

\noindent \vspace{1mm}\textit{Contributions:}
We make the following contributions.
\begin{itemize}[topsep=2pt]
    \item A constraint-based synthesis algorithm for synthesizing transducers from complex specifications (Sec.~\ref{sec:encoding}).
    \item Extensions of our synthesis algorithm to more complex models---e.g., symbolic transducers and transducers with lookahead---and problems---e.g., transducer repair---that showcase the flexibility of our approach and the power of working with transducers, which enjoy strong theoretical properties---unlike domain-specific languages (Sec.~\ref{sec:extensions}).
    \item \name: a tool that can synthesize and repair transducers and compares well with a state-of-the-art tool for synthesizing string transformations (Sec.~\ref{sec:eval}).
\end{itemize}

\noindent Proofs and additional results are available in the appendix.

\section{Transducer Synthesis Problem}
\label{sec:transducer-synthesis-problem}

In this section, we define the transducer synthesis problem.

A \textit{deterministic finite automaton} (DFA) over an alphabet $\Sigma$ is a tuple $D = (\setcustomstate{D}, \transition{D}, \initstate{D}, \finalstates{D})$: $\setcustomstate{D}$ is the set of states, $\transition{D}: \setcustomstate{D} \times \Sigma \rightarrow \setcustomstate{D}$ is the transition function, $\initstate{D}$ is the initial state, and $\finalstates{D}$ is the set of final states.
The extended transition function $\extendedtransition{D} : \setcustomstate{D} \times \Sigma^\ast \rightarrow \setcustomstate{D}$ is defined as $\extendedtransition{D}(q, \varepsilon) = q$ and $\extendedtransition{D}(q, au) = \extendedtransition{D}(\transition{D}(q, a), u)$.
We say that $D$ accepts a string $w$ if $\extendedtransition{D}(\initstate{D}, w) \in \finalstates{D}$. 
The \textit{regular language} $\mathcal{L}(D)$ is the set of strings accepted by a DFA $D$.
%

A total \textit{finite state transducer} (FT) is a tuple $T = (\setcustomstate{T}, \deltaTstate, \deltaTout, \initstate{T})$,
where
$\setcustomstate{T}$ are states and $\initstate{T}$ is the initial state. 
Transducers have two transition functions:
$\deltaTstate: \trnsstate \times \Sigma \rightarrow \trnsstate$ defines the target state, while $\deltaTout: \trnsstate \times \Sigma \rightarrow \Sigma^\ast$
 defines the output string of each transition.
The extended function for states $\extendeddeltaTstate$ is defined analogously to the extended transition function for a DFA. 
The extended function for output strings is defined as
$\extendeddeltaTout(q, \varepsilon) =\varepsilon$ and $\extendeddeltaTout(q, au)= \extendeddeltaTout(q, a)\cdot \deltaTout(\extendeddeltaTstate(q, a), u)$.
Given a string $w$ we use $T(w)$ to denote $\extendeddeltaTout(\initstate{T}, w)$, i.e., the output string generated by $T$ on $w$.
Given two DFAs $\inputtype$ and $\outputtype$, we write $\hoaretriple{\inputtype}{T}{\outputtype}$ for a transducer $T$ iff for every string $\inputexample$ in $\DFAlanguage{\inputtype}$, the output string $T(\inputexample)$ belongs to $\DFAlanguage{\outputtype}$.

An \textit{edit operation} on a string is either an insertion/deletion of a character, or a replacement of a character with a different one.
For example, editing the string \stringfont{ab} to the string \stringfont{acb} requires one edit operation, which is inserting a \stringfont{c} after the \stringfont{a}.
The \textit{edit distance} $\eddistfunc{\inputexample}{\outputexample}$ between two strings $\inputexample$ and $\outputexample$ is the number of edit-operations required to reach $\outputexample$ from $\inputexample$. 
We use $\lenfunc{w}$ to denote the length of a string $w$.
The \textit{mean edit distance} $\meaneddist{\inputexample}{\outputexample}$ between two strings $\inputexample$ and $\outputexample$ is defined as $\eddistfunc{\inputexample}{\outputexample} / \textit{len}(\inputexample)$. 
For example, the mean edit distance from \stringfont{ab} to \stringfont{acb} is $1/2=.5$.

We can now formulate the transducer synthesis problem.
We assume a fixed alphabet $\Sigma$.
If the specification requires that $\inputexample$ is translated to $\outputexample$, we write that as $\examplemap$.

\begin{problemstatement}[Transducer Synthesis] \sloppypar The transducer synthesis problem has the following inputs and output:\newline
\textbf{Inputs}
\begin{itemize}[topsep=1pt]
    \item Number of states $k$ and upper bound $\outputbound$ on the length of the output of each transition.
    
    \item Set of input-output examples $E=\exampleset$. 
    
    \item Input-output types $\inputtype$ and $\outputtype$, given as DFAs.
    
    \item A positive upper bound $\distancebound\in\mathbb{Q}$  on the mean edit distance.
\end{itemize}
\textbf{Output} A total transducer $T = (\setcustomstate{T}, \deltaTstate, \deltaTout, \initstate{T})$ with $k$ states such that:
\begin{itemize}[topsep=1pt]
\item Every transition of $T$ has an output with length at most $\outputbound$, i.e.,  $\forall \trnsstate \in \settrnsstate, \achar \in \Sigma.\ \lenfunc{\deltaTout(q, a)} \leq \outputbound$.
\item $T$ is consistent with the examples: $\forall \examplemap\in E.\  T(\inputexample) = \outputexample$.
\item $T$ is consistent with input-output types, i.e., $\hoaretriple{\inputtype}{T}{\outputtype}$.
\item For every string $w \in\inputtype$, $\meaneddist{w}{T(w)} \leq \distancebound$.
\end{itemize} 
\end{problemstatement}

The synthesis problem that we present here is for FTs, and in Section~\ref{sec:encoding}, we provide a sound algorithm to solve it using a system of constraints.
One of our key contributions is that 
our encoding can be easily adapted to synthesizing richer models than FTs (e.g., symbolic transducers~\cite{cacmSFA} and transducers with regular lookahead), while still using the same encoding building blocks (Section~\ref{sec:extensions}).

\section{Constraint-based Transducer Synthesis}
\label{sec:encoding}

In this section, we present a way to generate constraints to solve the transducer synthesis problem  defined in Section~\ref{sec:transducer-synthesis-problem}.
The synthesis problem can then be solved by invoking a \textit{Satisfiability Modulo Theories} (SMT) solver on the constraints.

We use a constraint encoding, rather than a direct algorithmic approach because of the multiple objectives to be satisfied.
Synthesizing a transducer that translates a set of input-output examples is already an NP-Complete problem~\cite{minimaltransducersynthesis}.
On top of that, we also need to handle input-output types and distances.
Our encoding is divided into three parts, one for each objective, which are presented in the following subsections.
This division makes our encoding very modular and programmable. In Section~\ref{sec:extensions} we show how it can be adapted to different transducer models and problems.
We include a brief description of the size of the constraint encoding in Section~\ref{sec:size_encoding} of the appendix.

\subsection{Representing Transducers}
\label{sec:representingTransducers}

\looseness-1 The transducer we are synthesizing has $k$ (part of the problem input) states $Q_T = \allowbreak{} \{q_0, ..., \allowbreak{} q_{k-1}\}$. 
%
We often use $\initstate{T}$ as an alternative for $q_0$, the initial state of $T$.

We illustrate how a transition $q_1 \xrightarrow{\texttt{a}/\texttt{bc}} q_2$ is represented in our encoding. 
The target state is captured using an uninterpreted function $\vardeltastate:Q_T\times \Sigma \to Q_T$, e.g., $\vardeltastate(q_1,\texttt{a}) = q_2$.
Representing the output of the transition is trickier because its length is not known a priori.
The synthesis problem however provides an output bound $\outputbound$, which allows us to limit the number of characters that may appear in the output.
We use an uninterpreted function $\vardeltaout:Q_T\times \Sigma \times \{0,\ldots,\outputbound{-}1\} \to \Sigma$
to represent each character in the output string; in our example,
$\vardeltaout(q_1,\texttt{a},0) = \texttt{b}$ and
$\vardeltaout(q_1,\texttt{a},1) = \texttt{c}$.
Since an output string's length can be smaller than $\outputbound$, we use an additional uninterpreted function $\vardeltaoutlen{}:Q_T\times \Sigma \to \{0,\ldots,\outputbound\}$ to model the length of a transition's output; in our example
$\vardeltaoutlen(q_1,\texttt{a})=2$.

We say an assignment to the above variables extends to a transducer $T$ for the transducer $T$ obtained by instantiating $\delta^{st}$ and $\delta^{out}$ as described above.

\subsection{Input-output Examples}
\fbox{%
    \parbox{0.47\textwidth}{%
       \textbf{Goal:} For each input output-example $\examplemap\in E$, $T$ should translate $\inputexample$ to $\outputexample$. 
    }%
}
\vspace{1ex}



Translating $\inputexample$ to the correct output string means that $\extendeddeltaTout(\initstate{T}, \inputexample) = \outputexample$.
Generating constraints that capture this behavior of $T$ on an example is challenging because we do not know a priori what parts of $\outputexample$ are produced by what steps of the transducer's run.
Suppose that we need to translate $\inputexample=a_0a_1$ to $\outputexample=b_0b_1b_2$. 
A possible solution is for the transducer to have the run $q_0 \xrightarrow{a_0/b_0} q1 \xrightarrow{a_1/b_1b_2} q_2$.
Another possible solution might be to instead have $q_0 \xrightarrow{a_0/b_0b_1} q1 \xrightarrow{a_1/b_2} q_2$.
Notice that the two runs traverse the same states but produce different parts of the output strings at each step.
Intuitively, we need a way to ``track'' how much output the transducer has produced before processing the $i$-th character in the input and what state it has landed in.
For every input example $\examplemap$ such that $\inputexample=a_0\cdots a_n$ and $\outputexample=b_0\cdots b_m$, we introduce an uninterpreted function $\position_\inputexample: \{0,\ldots,n\} \rightarrow \{0,\ldots,m\} \times \settrnsstate$ such that
$\position_\inputexample(i)=(j,\trnsstate)$ iff after reading $a_0\cdots a_{i-1}$, the transducer $T$ has produced the output $b_0\cdots b_{j-1}$ and reached state $\trnsstate$---i.e., $\extendeddeltaTout(q_0,a_0\cdots a_{i-1})=b_0\cdots b_{j-1}$ and $\extendeddeltaTstate(q_0,a_0\cdots a_{i-1}) = \trnsstate$.

We describe the constraints that describe the behavior of $\position_\inputexample$.
Constraint~\ref{eq:examples1} states that a configuration must start at the initial state and be at position 0 in the output.
\begin{equation}
\label{eq:examples1}
\position_\inputexample(0) = (0, \initstate{T})
\end{equation}

Constraint~\ref{eq:examples2} captures how the configuration is updated when reading the $i$-th character of the input.
For every $0\leq i < n$, $0 \leq j < m$, $c\in\Sigma$, and $q_T\in Q_T$:
\begin{equation}
\label{eq:examples2}
\begin{split}
    & \position_\inputexample(i) = (j, \trnsstate)
    \wedge
    a_i = c
    \Rightarrow \\
    & 
    \quad\quad[
    \bigwedge_{0 \leq z < l} (\vardeltaout(\trnsstate, c,z) = b_{j+z} \vee z\geq  \vardeltaoutlen(\trnsstate,c)) \wedge\\
    &
    \quad\quad\position_\inputexample(i + 1) = (j + \vardeltaoutlen(\trnsstate,c), \vardeltastate(\trnsstate, c))
    ] 
\end{split}
\end{equation}

Informally, if the $i$-th character is $c$ and the transducer has reached state $\trnsstate$ and produced the characters $b_0\cdots b_{j-1}$ so far, the transition reading $c$ from state $q_T$ outputs characters $b_{j}\cdots b_{j+f-1}$, where $f$ is the output length of the transition.
The next configuration is then $(j+f, \vardeltastate(\trnsstate, c))$.

Finally, Constraint~\ref{eq:examples3} forces $T$ to be completely done with generating $\outputexample$ when $\inputexample$ has been entirely read.
Recall that $\stringlength{\inputexample} = n$ and $\stringlength{\outputexample} = m$. 
\begin{equation}
\label{eq:examples3}
  \bigvee_{\trnsstate \in \settrnsstate} \position_\inputexample(n)
  = (m, \trnsstate)
\end{equation}

\noindent The constraint encoding for examples is sound and complete (Proofs in~\ref{sec:examples_proofs}).

\subsection{Input-Output Types}

\fbox{%
    \parbox{0.47\textwidth}{%
       \textbf{Goal:} $T$ should satisfy the property  $\hoaretriple{\inputtype}{T}{\outputtype}$.
    }%
}
\vspace{1ex}

Encoding this property using constraints is challenging because it requires enforcing
that when $T$ reads one of the (potentially) infinitely many strings in $\inputtype$ it always outputs
a string in $\outputtype$.
To solve this problem, we draw inspiration from how one proves that the property
$\hoaretriple{\inputtype}{T}{\outputtype}$ holds---i.e., using a simulation relation that relates runs over $\inputtype$, $T$ and $\outputtype$.
Intuitively, if $\inputtype$ has read some string $w$, we need to be able to encode the behavior of $T$ in terms 
of $w$, i.e., what state of $T$ this transducer is in after reading $w$ and what output string $w'$ it produced.
Further, we also need to be able to encode in which state $\outputtype$ would be after reading the output string $w'$.
We do this by introducing a function \simulationsym{}: $\setsrcstate \times \settrnsstate \times \settrgstate \rightarrow \{0, 1\}$, 
which preserves the following invariant: $\simulation{\srcstate}{\trnsstate}{\trgstate}$ holds if there exist strings $w,w'$ such that
    $\extendedtransition{\inputtype}(\initstate{\inputtype}, w) = q_{\inputtype}$, 
    $\extendeddeltaTstate(\initstate{T}, w) = \trnsstate$, 
    $\extendeddeltaTout(\initstate{T}, w) = w'$, 
    and $\extendedtransition{\outputtype}(\initstate{\outputtype}, w') = q_{\outputtype}$.
%

Constraint~\ref{eq:types1} states the initial condition of the simulation---i.e., $P$, $T$, and $Q$ are in their initial states.
\begin{equation} 
\label{eq:types1}
    \simulation{\initstate{\inputtype}}{\initstate{T}}{\initstate{\outputtype}}
\end{equation}

Constraint~\ref{eq:types2} encodes how we advance the simulation relation for states $\srcstate,\trnsstate,\trgstate$ and for a character $c\in\Sigma$, using free variables $c_0 \ldots, c_{l-1}$ and $\trgstate^0 \ldots, \allowbreak{} \trgstate^{l}$ that are separate for each combination of $\srcstate,\trnsstate,\trgstate$, and $c$:
\begin{equation} 
\label{eq:types2}
    \begin{split}
    \simulation{\srcstate}{\trnsstate}{\trgstate} \Rightarrow & 
    \bigwedge_{\mathclap{0 \leq z \leq \outputbound}}  (\vardeltaoutlen(\trnsstate, c) = z \Rightarrow \\
    & \quad [\bigwedge_{\mathclap{0 \leq x < z}} \vardeltaout(\trnsstate, c, x) {=} c_x] \wedge \\
    & \quad [\trgstate^0 {=} \trgstate \wedge \bigwedge_{\mathclap{1 \leq x < z}} \trgstate^x {=} 
    \vardeltaoutputtype(\trgstate^{x-1},c_{x-1})]
    \wedge \\
    & \quad \simulation{\transition{\inputtype}(\srcstate, \mathit{c})}{\vardeltastate(\trnsstate, c)}{\trgstate^z})
    \end{split}
\end{equation}

Intuitively, if $\simulation{\srcstate}{\trnsstate}{\trgstate}$ and we read a character $c$, $\inputtype$ moves to $\transition{\inputtype}(\srcstate, \mathit{c})$
and $T$ moves to $\vardeltastate{}(\srcstate, \mathit{c})$. 
However, we also need to advance $\outputtype$ and the $\vardeltaoutlen{}$ symbols produced by $\vardeltaout{}$.
We hard-code the transition relation $\transition{\outputtype}$ in an uninterpreted function $\vardeltaoutputtype: \settrgstate \times \Sigma \rightarrow \settrgstate$, and apply it to compute the output state reached when reading the output string.
E.g., if $\vardeltaoutlen{}(\trnsstate, c) = 2$ and $\vardeltaout(\trnsstate, c, 0)=c_0$ and $\vardeltaout(\trnsstate, c, 1)=c_1$, the next state in $\outputtype$ is $\vardeltaoutputtype(\vardeltaoutputtype(\trgstate, c_0), c_1)$.

Lastly, Constraint~\ref{eq:types3} states that if we encounter a string in $\DFAlanguage{\inputtype}$---i.e., $\inputtype$ is in a state $\srcstate \in \finalstates{\inputtype}$---the relation does not contain a state $\trgstate \notin \finalstates{\outputtype}$. Since  $\outputtype$ is deterministic, this means that $\outputtype$ accepts $T$'s output.
\begin{equation}
\label{eq:types3} 
    \bigwedge_{\srcstate \in \finalstates{\inputtype}}
    \bigwedge_{\trgstate \notin \finalstates{\outputtype}}
    \neg \simulation{\srcstate}{\trnsstate}{\trgstate}
\end{equation}

\noindent Soundness and completeness of the type constraints are proven in~\ref{sec:types_proofs}.

\subsection{Input-output Distance}

\fbox{%
    \parbox{0.47\textwidth}{%
       \textbf{Goal:} The mean edit distance between any input string $w$ in $\DFAlanguage{\inputtype}$ and the output string $T(w)$ should not exceed $\distancebound$.
    }%
}

\vspace{1mm}
Capturing the edit distance for all the possible inputs in the language of $P$ and the corresponding  outputs produced by the transducer 
is challenging because these sets can be infinite.
Furthermore, exactly computing the edit distance between an input and an output string may involve comparing characters appearing on different transitions in the transducer run.
For example, consider the transducer shown in Figure~\ref{fig:distance_example1} and suppose that we are only interested in strings in the input type $\inputtype = \texttt{a(ba)}{\ast}\texttt{a}$.
The first transition from $q_0$ deletes the \texttt{a}, therefore making 1 edit.
This transducer has a cycle between states $q_1$ and $q_2$, which can be taken any number of times.
Each iteration, locally, would require that we make 2 edits: one to change the \texttt{b} to \texttt{a}, and the other to change the \texttt{a} to \texttt{b}.
However, the total number of edits made over any string in the input type $\inputtype = \texttt{a(ab)}{\ast}\texttt{a}$ by this transducer is 1, because the transducer changes strings of the form $\texttt{a(ba)}^n\texttt{a}$ to be of the form $\texttt{(ab)}^n\texttt{a}$.
Looking at the transitions in isolation, we are prevented from deducing that the edit distance is always 1 because the first transition delays outputting a character.
If there was no such delay, as is the case for the transducer in Figure~\ref{fig:distance_example2}, which is equivalent on the relevant input type to the one in Figure~\ref{fig:distance_example1}, then this issue would not arise.

We take inspiration from Benedikt et al.~\cite{Benedikt11} and focus on the simpler problem of synthesizing a transducer that has `aggregate cost' that satisfies the given objective.\footnote{Benedikt et al.~\cite{Benedikt11} studied a variant of the problem where the distance is bounded by some finite constant.
%
%
Their work shows that when there is a transducer between two languages that has some bounded global edit distance, then there is also a transducer that is bounded (but with a different bound) under a local method of computing the edit distance---i.e., one where the computation of the edit distance is done transition by transition.}
For a transducer $T$ and string $s = a_0\ldots a_n$, let $\initstate{T} \xrightarrow{a_0 / y_0} \trnsstate^1 
\ldots
\trnsstate^{n} \xrightarrow{a_n / y_n} \trnsstate^{n+1}$ be the run of $s$ on $T$.
Then, the \textit{aggregate cost} of $T$ on $s$ is the sum of the edit distances $\eddistfunc{a_i}{y_i}$ over all indices $0 \leq i \leq n$.
The \textit{mean aggregate cost} of $T$ on $s$ is the aggregate cost divided by $\lenfunc{s}$, the length of $s$.
It follows that if $T$ has a mean aggregate cost lower than some specified $d$ for every string, then it also has a mean edit distance lower than $d$ for every string.

However, the mean aggregate cost overapproximates the edit distance, e.g., the transducer in Figure~\ref{fig:distance_example1} has mean aggregate cost 1, while the mean edit distance when considering only strings in $\inputtype = \texttt{a(ab)}{\ast}\texttt{a}$ is less than $1/2$.
For this reason, if the mean edit distance objective was set to $1/2$, our constraint encoding can only synthesize the transducer in Figure~\ref{fig:distance_example2}, and not the equivalent one in Figure~\ref{fig:distance_example1}.

\begin{figure}[t]
\captionsetup[subfigure]{justification=centering}
  \centering
  \begin{subfigure}{0.23\textwidth}
  \centering
  \resizebox{0.98\textwidth}{0.45\textwidth}{
    \begin{tikzpicture}[shorten >=1pt, node distance=1.75cm, auto]
    \node[state, accepting, initial] (q0) {$q_0$};
    \node[state, accepting, right of=q0] (q1) {$q_1$};
    \node[state, accepting, above of=q1] (q2) {$q_2$};
    \node[state, accepting, right of=q1] (q3) {$q_3$};
    
    \path[->] 
    (q0) edge node {$\texttt{a} {\rightarrow} \epsilon$} (q1)
    
    (q1) edge[bend left=30] node {$\texttt{b} {\rightarrow} \texttt{a}$} (q2)
    
    (q2) edge[bend left=30] node {$\texttt{a} {\rightarrow} \texttt{b}$} (q1)
    
    (q1) edge node {$\texttt{a} {\rightarrow} \texttt{a}$} (q3)
    ;
  \end{tikzpicture}
  }
  \caption{Transducer with delayed output}
  \label{fig:distance_example1}
  \end{subfigure}
  ~
  \begin{subfigure}{0.23\textwidth}
  \centering
  \resizebox{0.98\textwidth}{0.45\textwidth}{
    \begin{tikzpicture}[shorten >=1pt, node distance=1.75cm, auto]
    \node[state, accepting, initial] (q0) {$q_0$};
    \node[state, accepting, right of=q0] (q1) {$q_1$};
    \node[state, accepting, above of=q1] (q2) {$q_2$};
    \node[state, accepting, right of=q1] (q3) {$q_3$};
    
    \path[->] 
    (q0) edge node {$\texttt{a} {\rightarrow} \texttt{a}$} (q1)
    
    (q1) edge[bend left=30] node {$\texttt{b} {\rightarrow} \texttt{b}$} (q2)
    
    (q2) edge[bend left=30] node {$\texttt{a} {\rightarrow} \texttt{a}$} (q1)
    
    (q1) edge node {$\texttt{a} {\rightarrow} \epsilon$} (q3)
    ;
  \end{tikzpicture}
  }
  \caption{Transducer without delay}
  \label{fig:distance_example2}
  \end{subfigure}
  \caption{Transducers with and without delay.}
\end{figure}

Our encoding is complete for transducers in which the aggregate cost coincides with the actual edit distance.
We leave the problem of being complete with regards to global edit distance as an open problem. 
In fact, we are not even aware of an algorithm for \textit{checking} (instead of synthesizing) whether a transducer satisfies a mean edit distance objective.\footnote{The mean edit distance is similar to mean payoff~\cite{DBLP:reference/mc/BloemCJ18}, which discounts a cost by the length of a string and looks at the behavior of a transducer in the limit. Our distance is different because 1) it looks at finite-length strings, and 2) it requires computing the edit distance, which cannot be done one transition at a time.}
In Section~\ref{sec:regular-look-ahead}, we present transducers with lookahead, which can mitigate this source of incompleteness.
Furthermore, our evaluation shows that using the aggregate cost and enabling lookahead are both effective techniques in practice.

We can now present our constraints.
First, we provide constraints for the edit distance of individual
transitions (recall that transitions are being synthesized and we therefore
need to compute their edit distances separately).
Secondly, we provide constraints that implicitly compute state invariants to capture the aggregate cost between 
input and output strings at various points in the computation.
We are given a rational number $\distancebound$ as an input to the problem, which is the allowed distance bound.

\mypar{Edit Distance of Individual Transitions.}
To compute the edit distance between the input and the output of each transition, we introduce a function \vareddist{}: $\settrnsstate \times \Sigma \rightarrow \mathbb{Z}$.
For a transition from state $\trnsstate$ reading a character $c$,  $\vareddist{}(\trnsstate,c)$ represents the edit distance between $c$ and $\deltaTout(\trnsstate, c)$. Notice that this quantity is bounded by the output bound $\outputbound$.
The constraints to encode the value of this function are divided into two cases:
i) the output of the transition contains the input character $c$ (Constraint \ref{eq:distance1}),
ii) the output of the transition \textit{does not} contain the input character $c$ (Constraint \ref{eq:distance2}).
In both cases, the values are set via a simple case analysis on whether the length of the output is 0 (edit distance is 1) or not (the edit distance is related to the length of the output).
\begin{equation} 
\label{eq:distance1}
\begin{split}
    & [\bigvee_{\mathclap{0 \leq z < \vardeltaoutlen(\trnsstate, c)}} \vardeltaout(\trnsstate, c, z) = c] 
    \Rightarrow \\
    & \quad [
    \vardeltaoutlen(\trnsstate, c) = 0 \Rightarrow
    \vareddist(\trnsstate, c) = 1  \wedge \\
    & \quad \vardeltaoutlen(\trnsstate, c) \neq 0 \Rightarrow
    \vareddist(\trnsstate, c) = \vardeltaoutlen(\trnsstate, c) - 1 ]
\end{split}
\end{equation}
\begin{equation}
\label{eq:distance2}
\begin{split}
    & [\bigwedge_{\mathclap{0 \leq z < \vardeltaoutlen(\trnsstate, c)}}  \vardeltaout(\trnsstate, c, z) \neq c]
    \Rightarrow \\
    & \quad [
    \vardeltaoutlen(\trnsstate, c) = 0 \Rightarrow
    \vareddist(\trnsstate, c) = 1 \wedge \\
    & \quad \vardeltaoutlen(\trnsstate, c) \neq 0 \Rightarrow
    \vareddist(\trnsstate, c) = \vardeltaoutlen(\trnsstate, c) ]
\end{split}
\end{equation}

\mypar{Edit Distance of Arbitrary Strings.}
Suppose that $T$ has the transitions $q_0 \xrightarrow{\texttt{a}/\texttt{a}} q1 \xrightarrow{\texttt{a}/\texttt{bc}} q_2$, and the specified mean edit distance is $\distancebound=0.5$.
The edit distance is 0 for the first transition and 2 for the second one.
For the input string $\texttt{aa}$, the mean aggregate cost is $2 / 2$, which means that the specification is not satisfied.
In general, we cannot keep track of every input string in the input type and look at its length and the number of edits that were made over it.
So, how can we compute the mean aggregate cost over any input string?
The first part of our solution is to scale the edit distance over a single transition depending on the specified mean edit distance.
This operation makes it such that an input string is under the edit distance bound if the sum  of the weighted edit distances of its transitions is $\geq 0$.
The invariant we need to maintain is that the sum of the weights at any stage of the run gives us where we are with regard to the mean aggregate cost. 
For each transition we compute the difference between the edit distance over the transition and the specified mean edit distance $\distancebound$.
We introduce the uninterpreted function $\varweight : \settrnsstate \times \Sigma \rightarrow \mathbb{Q}$, which stands for weighted edit distance.
For a transition at $\trnsstate$ reading a character $c$, the weighted edit distance is given by $\varweight(\trnsstate, c) = \distancebound - \vareddist(\trnsstate, c)$.
The sum of the weights of all transitions tells us the cumulative difference.
Going back to our example, the weighted edit distances of the two transitions are $\varweight(q_0,\texttt{a})=0.5$ and $\varweight(q_1,\texttt{a})=-1.5$, making the cumulative distance $-1$ and implying that the specification is violated.
We can now compute the mean edit distance over a run without keeping track of the length of the run and the number of edits performed over it.

We still need to compute the weighted edit distance for every string in the possibly infinite language $\DFAlanguage{\inputtype}$.
Building on the idea of simulation from the previous section, we introduce a new function called $\varprogress: \setsrcstate \times \settrnsstate \times \settrgstate \rightarrow \mathbb{Q}$, which tracks an upper bound on the sum of the distances so far at that point in the simulation.
This function is similar to a \textit{progress measure}, which is a type of invariant used to solve \textit{energy games} \cite{progressmeasure}, a connection we expand on in Section~\ref{sec:related_work}.
In particular, we already know that if there exist strings $w,w'$ such that
    $\extendedtransition{\inputtype}(\initstate{\inputtype}, w) = q_{\inputtype}$, 
    $\extendeddeltaTstate(\initstate{T}, w) = \trnsstate$, 
    $\extendeddeltaTout(\initstate{T}, w) = w'$, 
    and $\extendedtransition{\outputtype}(\initstate{\outputtype}, w') = q_{\outputtype}$, then we have $\simulation{\srcstate}{\trnsstate}{\trgstate}$.
Let this run over $T$ be denoted by $\initstate{T} \xrightarrow{a_0/y_0} \trnsstate^1 \ldots \trnsstate^{n-1} \xrightarrow{a_{n-1}/y_{n-1}} \trnsstate$, where $w = a_0\cdots a_{n-1}$, $w' = y_0\cdots y_{n-1}$, and $\trnsstate = \trnsstate^n$.
We have that $\varprogress(\srcstate, \trnsstate, \trgstate) \geq \sum_{i = 0}^{n-1} \varweight(\trnsstate^i, a_i)$.

The \varprogress{} function is a budget on the number of  edits we can still perform.
At the initial states, we start with no `initial credit' and the energy is 0.
\begin{equation}
\label{eq:distance4}
    \varprogress(\initstate{\inputtype}, \initstate{T}, \initstate{\outputtype}) = 0
\end{equation}

Constraint~\ref{eq:distance5} bounds the energy budget according to the weighted edit distance of a transition by computing the minimum budget required at any point to still satisfy the distance bound. For each combination of $\srcstate,\trnsstate,\trgstate$, and $c \in \Sigma$, the constraint uses free variables $c_0, \ldots, c_{l}$ and $\trgstate^0, \ldots, \trgstate^{l-1}$:
\begin{equation}
\label{eq:distance5}
\begin{split}
    \hspace{-18mm}&\bigwedge_{\mathclap{0 \leq z < l}}  (\vardeltaoutlen(\trnsstate, c) {=} z \Rightarrow  \\
    & [\bigwedge_{\mathclap{0 \leq x < z}} \vardeltaout(\trnsstate, c, x) {=} c_x] {\wedge} 
    [\trgstate^0 {=} \trgstate \wedge \bigwedge_{\mathclap{1 \leq x < z}} \trgstate^x {=} 
    \vardeltaoutputtype(\trgstate^{x-1},c_{x-1})] \wedge \\
    & \energyfunc{\srcstate}{\trnsstate}{\trgstate} \geq \energyfunc{\transition{\inputtype}(\srcstate, c)}{\vardeltastate(\trnsstate, c)}{\trgstate^z} {-} 
    \varweight(\trnsstate, c))
\end{split}
\end{equation}

In our example, Constraint~\ref{eq:distance5} encodes that the energy at $q_0$ can be 1 less than that at $q_1$, but that the energy at $q_1$ needs to be $3$ greater than at $q_2$ since we need to spend 3 edit operations over the second transition.

At any point during a run, the transducer is allowed to go below the mean edit distance and then `catch up' later because we only care about the edit distance when the transducer has finished reading a string in $\DFAlanguage{\inputtype}$.
Therefore, when we reach a final state of $\inputtype$, the transducer should not be in `energy debt'.
\begin{equation} 
\label{eq:distance6}
    \bigwedge_{\srcstate \in \finalstates{\inputtype}} \simulation{\srcstate}{\trnsstate}{\trgstate} \Rightarrow \energyfunc{\srcstate}{\trnsstate}{\trgstate} \geq 0
\end{equation}

\noindent The encoding presented in this section is sound (Proofs in~\ref{sec:distances_proofs}).


\section{Richer Models and Specifications}
\label{sec:extensions}

We extend our technique to more expressive models (Sections~\ref{sec:symbolic_extension} and~\ref{sec:regular-look-ahead}) and show how our synthesis approach can be used not only to synthesize transducers, but also to repair them (Section~\ref{sec:repair}).
Furthermore, in Appendix~\ref{sec:distances}, we describe an encoding of an alternative distance measure.

\subsection{Symbolic Transducers} 
\label{sec:symbolic_extension}

Symbolic finite automata (s-FA) and transducers  (s-FT) extend their non-sym\-bo\-lic counterparts by allowing transitions to carry predicates and functions to represent (potentially infinite) sets of input characters and output strings.
Figure~\ref{fig:symbolic_transducer} shows an s-FT that extends the escapeQuotes transducer from Figure~\ref{fig:escape_finite_correct} to handle alphabetic characters.
The bottom transition from $q_0$ reads a character \texttt{"} (bound to the variable $x$) and outputs the string \texttt{\textbackslash "} (i.e., a \texttt{\textbackslash} followed by the character stored in $x$).
Symbolic finite automata (s-FA) are s-FTs with no outputs.
To simplify our exposition, we focus on s-FAs and s-FTs that only operate over ASCII characters that are ordered by their codes.
In particular, all of our predicates are unions of intervals over characters (i.e., $ x \neq \texttt{\textbackslash}$ is really the union of intervals [\texttt{NUL}-\texttt{[}] and [\texttt{]}-\texttt{DEL}]); we often use the predicate notation instead of explicitly writing the intervals for ease of presentation.
Furthermore, we only consider two types of output functions: constant characters and offset functions of the form $x+k$ that output the character obtained by taking the input $x$ and adding a constant $k$ to it---e.g., applying $x+(-32)$ to a lowercase alphabetic letter gives the corresponding uppercase letter.

In the rest of the section, we show how we can solve the transducer synthesis problem in the case where $\inputtype$ and $\outputtype$ are s-FAs and the goal is to synthesize an s-FT (instead of an FT) that meets the given specification.
Intuitively, we do this by `finitizing' the alphabet of the now symbolic input-output types, synthesizing a finite transducer over this alphabet using the technique presented in Section~\ref{sec:encoding}, and then extracting an s-FT from the solution.

\mypar{Finitizing the Alphabet.}
The idea of finitizing the alphabet of s-FAs is a known one~\cite{cacmSFA} and is based on the concept of $\minterms$, which is the set of maximal satisfiable Boolean combinations of the predicates appearing in the s-FAs.
For an s-FA $\sfa$, we can define its set of predicates as: $\mathit{Predicates}(\sfa) = \{\phi \mid q \xrightarrow{\phi} q' \in \transition{\sfa} \}$.
The set of minterms $\mintermsshort(\sfa)$ is the set of satisfiable Boolean combinations of all the predicates in $\mathit{Predicates}(\sfa)$.
For example, for the set of predicates over the s-FT escapeQuotes in Figure~\ref{fig:symbolic_transducer}, we have that $\mintermsshort(\text{escapeQuotes}) =\{x \neq \texttt{"} \wedge x \neq \texttt{\textbackslash}, 
            x = \texttt{"}, 
            x = \texttt{\textbackslash}\}$.
The reader can learn more about minterms in \cite{cacmSFA}.
We assign each minterm a representative character, as indicated in Figure~\ref{fig:minterms}, and then construct a finite automaton from the resulting finite alphabet $\Sigma$.
For a character $c \in \Sigma$, we refer to its corresponding minterm by $\mintermpred{c}$.
In the other direction, for each minterm $\psi \in \minterms(\sfa)$, we refer to its uniquely determined representative character by $\witnessshort{\psi}$.

\begin{figure}[t]
  \captionsetup[subfigure]{justification=centering}
  \centering
  \begin{subfigure}{0.23\textwidth}
  \centering
  \resizebox{0.98\textwidth}{0.45\textwidth}{
  \begin{tikzpicture}[shorten >=1pt, node distance=3cm, auto] 
    \node[state, accepting, initial] (q0) {$q_0$};
    \node[state, accepting, right of=q0] (q1) {$q_1$};
    
    \path[->]  
            (q0) edge[loop above] node{$ x \neq \texttt{"} \wedge x \neq \texttt{\textbackslash}$ $\rightarrow$ $x$} (q0)
            
            (q0) edge[loop below] node{$ x = \texttt{"}$ $\rightarrow$ \texttt{\textbackslash}$x$} (q0)
            
            (q0) edge node[above] {$ x = \texttt{\textbackslash}$ $\rightarrow$ $x$} (q1)
            
            (q1) edge[bend left] node{$ x \neq \texttt{\textbackslash}$ $\rightarrow$ $x$} (q0)
            
            (q1) edge[bend right, above] node {$ x = \texttt{\textbackslash}$ $\rightarrow$ $x$} (q0)
            ;
    \end{tikzpicture}
  }
  \caption{escapeQuotes s-FT}
  \label{fig:symbolic_transducer}
  \end{subfigure}
  ~
  \begin{subfigure}{0.23\textwidth}
  \centering
  \resizebox{0.98\textwidth}{0.45\textwidth}{  
  \begin{tikzpicture}[shorten >=1pt, node distance=2.5cm, auto] 
    \node[state, accepting, initial] (q0) {$q_0$};
    \node[state, accepting, right of=q0] (q1) {$q_1$};
    \path[->]  
            (q0) edge[loop above] node{\texttt{a} $\rightarrow$ \texttt{a}} (q0)
            
            (q0) edge[loop below] node{\texttt{"} $\rightarrow$ \texttt{\textbackslash "}} (q0)
            
            (q0) edge node[above] {\texttt{\textbackslash} $\rightarrow$ \texttt{\textbackslash}} (q1)
            
            (q1) edge[bend left=55] node{\texttt{a} $\rightarrow$ \texttt{a}} (q0)
            
            (q1) edge[bend left=20] node{\texttt{"} $\rightarrow$ \texttt{"}} (q0)
            
            (q1) edge[bend right=45, above] node {\texttt{\textbackslash} $\rightarrow$ \texttt{\textbackslash}} (q0)
            ;
    \end{tikzpicture}
    }
    \caption{$\finite{escapeQuotes}$}
    \label{fig:escape_finite}
    \end{subfigure}
    
    \vspace{1ex}
    \begin{subfigure}{0.47\textwidth}
    \centering
    \fbox{%
    \scriptsize
        \parbox{0.95\textwidth}{%
            \textbf{minterms:}
            $ [x \neq \texttt{"} \wedge x \neq \texttt{\textbackslash}] $, 
            $ [x = \texttt{"}] $, 
            $ [x = \texttt{\textbackslash}] $ \\
            \textbf{witness char:}
            $ \witnessshort{[x \neq \texttt{"} \wedge x \neq \texttt{\textbackslash}]} {=} \texttt{a} $,
            $ \witnessshort{[x = \texttt{"}]} {=} \texttt{"} $,
            $ \witnessshort{[x = \texttt{\textbackslash}]} {=} \texttt{\textbackslash} $
        }%
    }
    \caption{Set of minterms and their witness elements}
    \label{fig:minterms}
    \end{subfigure}
\caption{Example of Finitization}
\label{fig:minterm_reduction}
\end{figure}

For an s-FA $\sfa$, we denote its corresponding FA over the alphabet $\mintermsshort(\sfa)$ with $\finite{M}$.
Given an s-FA $\sfa$, the set of transitions of $\finite{\sfa}$ is defined as follows:
\[
\transition{\finite{\sfa}} {=}  \{q \xrightarrow{\witnessshort{\psi}} q' {\mid} q \xrightarrow{\phi} q' \wedge \psi \in \mintermsshort(\sfa) \wedge \text{IsSat}
        (\psi \land \phi)\}
\]
This algorithm replaces a transition guarded by a predicate $\phi$ in the given s-FA with a set of transitions consisting of the witnesses of the minterms where $\phi$ is satisfiable.
In interval arithmetic this is the set of intervals that intersect with the interval specified by $\phi$.
The transition from $q_1$ guarded by the predicate $[x \neq \texttt{\textbackslash}]$ in Figure~\ref{fig:symbolic_transducer} intersects with 2 minterms  $[x \neq \texttt{"} \wedge x \neq \texttt{\textbackslash}]$ and $[x = \texttt{"}]$.
As a result, we see that this transition is replaced by two transitions in Figure~\ref{fig:escape_finite}, one that reads \texttt{"} and another that reads \texttt{a}.

\mypar{From FTs to s-FTs.}
Once we have synthesized an FT $T$, we need to extract an s-FT from it.
There are many s-FTs equivalent to a given FT and here we present one way of doing this conversion which is used in our implementation.
Let the size of an interval $I$ (the number of characters it contains) be given by $\intervalsize(I)$, and the offset between 2 intervals $I_1$ and $I_2$ (i.e. the difference between the least elements of $I_1$ and $I_2$) be given by $\mathit{offset}(I_1,I_2)$. 
Suppose we have a transition $q \xrightarrow{c/y_0\cdots y_n} q'$, where $c, y_i \in \Sigma$. 
Then, we construct a transition $q \xrightarrow{\mintermpred{c}/f_0\cdots f_n} q'$, where for each $y_i$, the corresponding function $f_i$ is determined by the following rules ($x$ always indicates variable bound to the input predicate): 
\begin{enumerate}[topsep=1pt]
    \item If $c = y_i$, then $f_i = (x)$, i.e. the identity function.

    \item If $\mintermpred{c}$ and $\mintermpred{y_i}$ consist of single intervals $I_1$ and $I_2$, respectively, such that $\intervalsize(I_1) = \intervalsize(I_2)$ , then $f_i = (x + \mathit{offset}(I_1,I_2))$.
    For instance, if the input interval is \texttt{[a-z]} and the output interval is \texttt{[A-Z]}, then the output function is $(x + (-32))$, which maps lowercase letters to uppercase ones.
    
    \item Otherwise $f_i = y_i$---i.e., the output is a  character in the output minterm.
\end{enumerate}

While our s-FT recovery algorithm is sound, it may apply case 3 more often than necessary and introduce many constants, therefore yielding a transducer that does not generalize well to unseen examples.
However, our evaluation shows that our technique works well in practice.
We provide a sketch of the proof of soundness of this algorithm in Appendix~\ref{sec:sft_synthesis_soundness}.

\subsection{Synthesizing Transducers with Lookahead}
\label{sec:regular-look-ahead}

Deterministic transducers cannot express functions where the output at a certain transition depends on future characters in the input.
Consider the problem of extracting all substrings of the form \texttt{<x>} (where $\texttt{x} \neq \texttt{<}$) from an input string. 
This is the \emph{getTags} problem from \cite{POPL12SFT}.
A deterministic transducer cannot express this transformation because when it reads \texttt{<} followed by \texttt{x} it has to output \texttt{<x} if the next character is a \texttt{>} and nothing otherwise.
However, the transducer does not have access to the next character! 

%
Instead, we extend our technique to handle deterministic transducers with lookahead, i.e., the ability to look at the string suffix when reading a symbol.
Formally, a \textit{Transducer with Regular Lookahead} is a pair $(T, \lookaheadaut)$ where $T$ is an FT with $\Sigma_T = \setlookaheadstate \times \Sigma$, and $\lookaheadaut$ is a total DFA with $\Sigma_{\lookaheadaut} = \Sigma$.
The transducer $T$ now has another input in its transition function, although it still only outputs characters from $\Sigma$, i.e., $\deltaTout:\settrnsstate \times (\setlookaheadstate \times \Sigma) \rightarrow \Sigma$, and $\deltaTstate:\settrnsstate \times (\setlookaheadstate \times \Sigma) \rightarrow \settrnsstate$.
The semantics is defined as follows.
Given a string $w = a_0\cdots a_{n}$, we define a function $\lookahead_w$ such that $\lookahead_w(i)=\transition{\lookaheadaut} (\initstate{\lookaheadaut}, a_{n}\cdots a_{i+1})$.
In other words, $\lookahead_w(i)$ gives the state reached by $\lookaheadaut$ on the reversed suffix starting at $i + 1$.
%
At each step $i$, the transducer $T$ reads the symbol $(a_i,\lookahead_w(i))$.
The extended transition functions now take as input a lookahead word, which is a sequence of pairs of lookahead states and characters, i.e., from $(\setlookaheadstate \times \Sigma)^\ast$.

To synthesize transducers with lookahead, we introduce uninterpreted functions $\vardeltalookahead$ for the transition function of $\lookaheadaut$, and $\varlookahead_{w}$ for the $r$-values of $w$ on $\lookaheadaut$.
Additionally, we introduce a bound $\numlookaheadstates$ on the number of states in the lookahead automaton $R$ as our synthesis algorithm has to synthesize both $T$ and $R$ at the same time.
Appendix~\ref{sec:lookahead_constraints} provides the modified constraints needed to encode input-output types and input-output examples to use lookahead.

Part of the transducer with lookahead we synthesize for the getTags problem is shown in Figure~\ref{fig:get_tags_lookahead}.
Notice that there are 2 transitions out of $q_1$ for the same input but different lookahead state: the string \texttt{<x} is outputted when the lookahead state is $r_1$.

\begin{figure}[t]
  \captionsetup[subfigure]{justification=centering}
  \centering
  \begin{subfigure}{0.23\textwidth}
  \centering
  \resizebox{0.98\textwidth}{0.45\textwidth}{
  \begin{tikzpicture}[shorten >=1pt, node distance=3cm, auto]
    \node[state, accepting, initial] (q0) {$q_0$};
    \node[state, accepting, right of=q0] (q1) {$q_1$};
    
    \path[->] 
    (q0) edge node {$x = \texttt{<}, r_0 \rightarrow \epsilon$} (q1)
    
    (q1) edge[loop above] node {$ x \neq \texttt{<} \wedge x \neq \texttt{>}, r_1 \rightarrow \texttt{<x}$} (q1)
    
    (q1) edge[loop below] node {$ x \neq \texttt{<} \wedge x \neq \texttt{>}, r_0 \rightarrow \epsilon$} (q1)
    ;
  \end{tikzpicture}
  }
  \caption{Subset of transitions in $T$}
  \label{fig:transducer_lookahead}
  \end{subfigure}~
  \begin{subfigure}{0.23\textwidth}
  \centering
  \resizebox{0.98\textwidth}{0.45\textwidth}{
  \begin{tikzpicture}[shorten >=1pt, node distance=2.5cm, auto]
    \node[state, accepting, initial] (r0) {$r_0$};
    \node[state, accepting, right of=q0] (r1) {$r_1$};
    
    \path[->]  
            (r0) edge[loop above] node{$ x \neq \texttt{<} \wedge x \neq \texttt{>}$} (r0)
            
            (r0) edge[loop below] node{$ x = \texttt{<}$} (r0)
            
            (r0) edge node[below] {$ x = \texttt{>}$} (r1)
            
            (r1) edge[loop above] node {$ x \neq \texttt{<} \wedge x \neq \texttt{>}$} (r1)
            
            (r1) edge[bend right=20] node[above] {$ x = \texttt{>}$} (r0)
            
            (r1) edge[bend left=30] node[below] {$ x = \texttt{<}$} (r0)
            ;
    \end{tikzpicture}
    }
    \caption{Lookahead automaton $\lookaheadaut$}
    \label{fig:lookahead_aut}
    \end{subfigure}
  
  \caption{Regular lookahead for getTags}
  \label{fig:get_tags_lookahead}
\end{figure}

\mypar{Lookahead and aggregate cost:} 
Lookahead can help representing transducers, even deterministic ones, in a way that has lower aggregate cost---i.e., the aggregate cost better approximates the actual edit distance.
Suppose that we want to synthesize a transducer that translates the string \texttt{abc} to \texttt{ab} and the string \texttt{abd} to \texttt{bd}.
This translation can be done using a deterministic transducer with transitions $q_0 \xrightarrow{a/\epsilon} q_1 \xrightarrow{b/\epsilon} q_2$, followed by two transitions from $q_2$ that choose the correct output based on the next character.
Such a transducer would have a high aggregate cost of 4, even though the actual edit distance is 1.
In contrast, using lookahead we can obtain a transducer that can output each character when reading it; this transducer will have aggregate cost 1 for either string.
We conjecture that for every transducer $T$, there always exists an equivalent transducer with regular lookahead $(T',R)$ for which the edit distance computation for aggregate cost coincides with the actual edit distance of $T$.
%

\subsection{Transducer Repair}
\label{sec:repair}

In this section, we show how our synthesis technique can also be used to ``repair'' buggy transducers.
The key idea is to use the closure properties of automata and transducers---e.g., closure under union and sequential compositions~\cite{cacmSFA}---to reduce repair problems to synthesis ones.
The ability to algebraically manipulate transducers and automata is one of the key aspects that distinguishes our work from other synthesis works that use domain-specific languages~\cite{gulwani2011flashfill,symmetriclenses}.

We describe two settings in which we can repair an incorrect transducer $\bad$:
\noindent \textbf{1.} Let $\hoaretriple{\inputtype}{\bad}{\outputtype}$ be an input-output type violated by $\bad$ and let $\outputlang{\inputtype}{\bad}$ be the finite automaton describing the set of strings $\bad$ can output when fed inputs in $\inputtype$ (this is computable thanks to closure properties of transducers).
We are interested in the case where $\outputlang{\inputtype}{\bad}\setminus \outputtype\neq \emptyset$---i.e., $\bad$ can produce strings that are not in the output type.

\noindent \textbf{2.} Let $\exampleset$ be a set of input-output examples. We are interested in the case where there is some example $\examplemap$ such that $\bad(\inputexample) \neq \outputexample$.

We describe two ways in which we repair transducers.



\mypar{Repairing from the Input Language.}
This approach synthesizes a new transducer for the inputs on which $\bad$ is incorrect.
Using properties of transducers, we can compute an automaton describing the exact set of inputs $\badinput\subseteq \inputtype$ for which $\bad$ does not produce an output in $\outputtype$ (see pre-image computation in~\cite{LorisCAV17}).
Let $restrict(T,L)$ be the transducer that behaves as $T$ if the input is in $L$ and does not produce an output otherwise (transducers are closed under restriction~\cite{LorisCAV17}).
If we synthesize a transducer $T_1$ with type $\hoaretriple{\badinput}{T_1}{\outputtype}$, then the transducer $restrict(T_1,\badinput)\cup restrict(\bad,\inputtype\setminus\badinput)$ satisfies the desired input-output type (transducers are closed under union).

\mypar{Fault Localization from Examples.}
We use this technique when $\bad$ is incorrect on some example.
We can compute a set of ``suspicious'' transitions by taking all the transitions traversed when $T(\inputexample) \neq \outputexample$ for some $\examplemap\in E$ (i.e., one of these transitions is wrong) and removing all the transitions traversed when $T(\inputexample) = \outputexample$ for some $\examplemap\in E$ (i.e., transitions that are likely correct).
Essentially, this is a way of identifying $\badinput$ when $\bad$ is wrong on some examples.
We can also use this technique to limit the transitions we need to synthesize when performing repair.

\section{Evaluation}
\label{sec:eval}

We implemented our technique in a Java tool \name (Automatic Synthesis of TRAnsducers), which uses Z3 \cite{z3_2008} to solve the generated constraints.
We evaluate using a 2.7 GHz Intel Core i5, RAM 8 GB, with a 300s timeout.



\subsection*{Q1: Can \name synthesize practical transformations?}
\label{sec:eval-synthesis}

\textit{Benchmarks.}
Our first set of benchmarks is obtained from Optician~\cite{symmetriclenses,bijectivelenses}, a tool for synthesizing lenses, which are bidirectional programs used for keeping files in different data formats synchronized.
We adapted 11 of these benchmarks to work with \name (note that we only synthesize one-directional transformations), and added one additional benchmark extrAcronym2, which is a harder variation (with a larger input type) of extrAcronym.
We excluded benchmarks that require some memory, e.g.,  swapping words in a sentence, as they cannot be modeled with transducers.

Our second set of benchmarks (Miscellaneous) consists of 6 problems we created based on file transformation tasks (unixToDos, dosToUnix and CSVSeparator), and s-FTs from the literature--escapeQuotes from \cite{bekonline}, getTags and quicktimeMerger from \cite{POPL12SFT}.
All of the benchmarks require synthesizing s-FTs and getTags requires synthesizing an s-FT with lookahead (details in Table~\ref{table:synthbenchmarks}).

To generate the sets of examples, we started with the examples that were used in the original source when available.
In 5 cases, \name synthesized a transducer that was not equivalent to the one synthesized by Optician, even though it did meet the specification.
In these cases, we used \name to synthesize two different transducers that met the specification, computed a string on which the two transducers differed, and added the desired output for that string as a new example.
We repeated this task until \name yielded the desired transducer and we report the time for such sets of examples.
The number of examples varies between 1 and 5.
The ability to perform equivalence checks of two transducers is  yet another reason why synthesizing transducers is useful and is in some ways preferable to synthesizing programs in a DSL.
For each benchmark we chose a mean edit distance of 0.5 when the transformation could be synthesized with this distance and of 1 otherwise.

\begin{table*}[!htbp]\centering
\caption{Metrics of \name's performance on the set of synthesis benchmarks. The right-most set of columns gives the synthesis time for \name and Optician (under 2 different configurations).
The middle set of columns gives the sizes of the parameters to the synthesis problem. 
In particular, $\setsrcstate$ and $\settrgstate$ denote the number of input and output states and $\transition{\inputtype}$ and $\transition{\outputtype}$ denote the number of transitions in the input and output types, respectively.
A \xmark represents a benchmark that failed.
\NA\ stands in for data that is not available; this is because we only re-ran Optician on the benchmarks that were already encoded in its benchmark set, plus a few additional ones for comparing between the tools that we wrote ourselves.}
\label{table:synthbenchmarks}
\small
\begin{tabular}{c|l|rrrrrrrrr|rrr}\toprule
\multirow{2}{*}[-0.4ex]{} & & & & & & & & & & & & \quad Time(s) & \\
& Benchmark & $\setsrcstate$ & $\settrgstate$ &
$\transition{\inputtype}$ & $\transition{\outputtype}$ & $\Sigma$ & $E$ &$k$ &$l$ & $\distancebound$ &\name &Optician & Optician-re \\\midrule
\multirow{12}{*}[-0.4ex]{\rotatebox{90}{Optician}} & extrAcronym &6 &3 &10 &3 &3 &2 &1 &1 & .5 &0.11 &0.05 & \xmark \\
& extrAcronym2 &6 &3 &16 &3 &3 &3 &2 &1 & 1 &0.42 & \NA & \NA \\
& extrNum &15 &13 &17 &12 &3 &1 &1 &1 & 1 &0.93 &0.05 &0.07 \\
& extrQuant &4 &3 &8 &5 &2 &1 &2 &1 & 1 &0.19 &0.09 & \xmark \\
& normalizeSpaces &7 &6 &19 &10 &2 &2 &2 &1 & 1 &0.46 &16.64 &\xmark \\
& extrOdds &15 &9 &29 &13 &5 &3 &3 &2 & 1 &15.87 &0.12 &\xmark \\
& capProb &3 &3 &3 &3 &2 &2 &2 &1 & 1 &0.05 &0.05 &\xmark \\
& removeLast &6 &3 &8 &3 &3 &3 &2 &1 & .5 &0.21 &0.15 &0.07 \\
& sourceToViews &18 &7 &26 &15 &5 &3 &3 &2 & 1 &50.92 &0.06 &\xmark \\
& normalizeNamePos &19 &7 &35 &24 &13 & 1 & 6 & 2 & 1 & \xmark &0.05 &0.10 \\
& titleConverter &22 &13 &41 &41 &15 & 1 & 3 & 1 & 1 & \xmark &0.07 &\xmark \\
& bibtextToReadable &14 &11 &41 &35 &12 & 1 & 5 & 1 & 1 & \xmark &0.64 &0.15 \\
\hline
\multirow{6}{*}[-0.4ex]{\rotatebox{90}{Miscellaneous}} & unixToDos &5 &7 &17 &19 &4 &4 &2 &2 & .5 &1.24 & \NA & \NA \\
& dosToUnix &7 &5 &19 &17 &4 &4 &2 &1 & .5 &0.41 & \NA & \NA \\
& CSVSeparator &5 &5 &9 &9 &4 &1 &1 &1 & 1 &0.142 & \NA & \NA \\
& escapeQuotes &2 &2 &6 &5 &3 &5 &2 &2 & 1 &0.188 & \xmark & \xmark \\
& quicktimeMerger &7 &3 &9 &3 &2 &2 &1 &1 & .5 &0.075 & \NA & \NA \\
& getTags & 3& 3& 9& 4& 3& 5& 2& 2& 1& 0.95& \xmark & \xmark \\
\bottomrule
\end{tabular}
\end{table*}

\mypar{Effectiveness of \name.}
\name can solve $15/\numsynthbenchmarks$ benchmarks (13 in <1s and 2 under a minute) and times out on 3 benchmarks where both $P$ and $Q$ are big.

While the synthesized transducers have at most 3 states, we note that this is because \name synthesizes total transducers and then restricts their domains to the input type $\inputtype$.
This is advantageous because synthesizing small total transducers is easier than synthesizing transducers that require more states to define the domain.
For instance, when we restrict the solution of extrAcronym2 to its input type, the resulting transducer has 11 states instead of the 2 required by the original solution!

\mypar{Comparison with Optician.}
We do not  compare \name to tools that only support input-output examples.
Instead, we compare \name to Optician on the set of benchmarks common to both tools.
Like \name, Optician supports input-output examples and types, but the types are expressed as regular expressions.
Furthermore, Optician also attempts to produce a program that minimizes a fixed information theoretical distance between the input and output types~\cite{symmetriclenses}.

Optician is faster when the number of variables in the constraint encoding increases, while \name is faster on the normalizeSpaces benchmark.
Optician, which uses regular expressions to express the input and output types, does not work so well with unstructured data.
To confirm this trend, we wrote synthesis tasks for the escapeQuotes and getTags benchmarks in Optician and it was unable to synthesize those as well---e.g., escapeQuotes requires replacing every \texttt{"} character with \texttt{\textbackslash "}.

To further look at the reliance of Optician on regular expressions, we converted the regular expressions used in the lens synthesis benchmarks to automata and then back to regular expressions using a variant of the state elimination algorithm that acts on character intervals.
This results in regular expressions that are not very concise and might have redundancies.
Optician could only solve 4/11 benchmarks that it was previously synthesizing (Optician-re in Table~\ref{table:synthbenchmarks}).

\textbf{Answer to Q1:} \name can solve real-world benchmarks and has performance comparable to that of Optician for similar tasks. Unlike Optician, \name does not suffer from variations in how the input and output types are specified.

\subsection*{Q2: Can \name repair transducers in practice?}
\label{sec:eval-repair}
\textit{Benchmarks.}
We considered the benchmarks in Table~\ref{table:repairbenchmarks}.
The only pre-existing benchmark that we found was escapeQuotes, through the interface of the Bek programming language used for verifying transducers \cite{bekonline}.
We generated 11 additional faulty transducers to repair in the following two ways:
\rone Introducing faults in our synthesis benchmarks: We either replaced the output string of a transition with a constant character, inserted an extra character, or deleted a transition altogether.
\rtwo Incorrect transducers: We intentionally provided fewer input-output examples and used only example-based constraints on some of our synthesis benchmarks.

All the benchmarks involve s-FTs.
Three benchmarks are wrong on both input-output types and examples and the rest are only wrong on examples.
Additionally, we note that to repair a transducer, we need the ``right'' set of minterms. 
Typically, the set of minterms extracted from the transducer predicates is the right one, but in the case of the escapeBrackets problems, \name needs a set of custom minterms we provide manually---i.e., repairing the transducer requires coming up with a new predicate.
%
We are not aware of another tool that solves transducer repair problems and so do not show any comparisons.

\begin{table*}[htbp!]\centering
\caption{Metrics of \name's performance on the set of repair benchmarks. The two right-most columns give the synthesis time without and with the use of templates.
Default is the case where a new transducer is synthesized for $\badinput$ and Template is the case where a partial solution to the solver is provided.
The $\transition{\bad}$ column gives the number of transitions that were localized by the fault-localization procedure as a fraction of the total number of transitions in the transducer. 
The other columns that describe the parameters of the synthesis problem in the default case are the same as for Table~\ref{table:synthbenchmarks}. }
\label{table:repairbenchmarks}
\small
\begin{tabular}{c|l|rrrrrrrrrr|rr}\toprule
\multirow{2}{*}[-0.4ex]{} & & & & & & & & & & & & {Time(s)} & \\
& Benchmark & $\setsrcstate$ & $\settrgstate$ &
$\transition{\inputtype}$ & $\transition{\outputtype}$ & $\Sigma$ & $E$ &$k$ &$l$ & $\distancebound$ & $\transition{\bad}$ &Default &Template  \\\midrule
\multirow{7}{*}[-0.4ex]{\rotatebox{90}{Fault injected}} & swapCase1 &2 &1 &6 &3 &3 &2 &1 &1 & 1 &3/3 &0.04 &0.02 \\
& swapCase2 &2 &1 &4 &3 &3 &2 &1 &1 &1 & 1/2 & \xmark & \xmark \\
& swapCase3 &2 &1 &6 &3 &3 &2 &1 &1 & 1 &1/3 &0.06 &0.05 \\
& escapeBrackets1 &2 &6 &16 &36 &8 &4 &1 &4 & 4 &1/3 &0.69 &0.42 \\
& escapeBrackets2 & 1& 6& 1& 7& 6& 5& 1& 4& 4 & 1/2 & \xmark & \xmark \\
& escapeBrackets3 &2 &7 &8 &36 &9 &5 &1 &4 & 4 &2/3 &1.12 &0.34 \\
& caesarCipher &2 &1 &4 &2 &3 &1 &1 &1 &1 &1/1 & \xmark & \xmark \\
\hline
\multirow{4}{*}[-0.4ex]{\rotatebox{90}{Synth.}} & extrAcronym2 &11 &3 &30 &3 &3 &3 &2 &1 & 1 &12/30 &0.59 &10.15 \\
& capProb &3 &3 &3 &3 &2 &2 &2 &1 & 1 &3/3 &0.04 &0.04 \\
& extrQuant &8 &3 &16 &5 &2 &1 &2 &1 & 1 &5/10 &0.37 &0.51 \\
& removeLast &6 &3 &8 &3 &3 &2 &2 &1 & .5 &7/8 &0.40 &1.08 \\
\hline
& escapeQuotes & 3& 2& 9& 5& 3 & 5 & 2 & 1 & 1 & 3/5& 0.17 & 0.10\\
\bottomrule
\end{tabular}
\end{table*}

\mypar{Effectiveness of \name.}
We indicate the number of suspicious transitions identified by our fault localization procedure (Section~\ref{sec:repair}) in the column labeled $\transition{\bad}$.
In many cases, \name can detect 50\% of the transitions or more as being likely correct, therefore reducing the space of unknowns.

We compare 2 different ways of solving repair problems in \name.
One uses the repair-from-input approach described in Section~\ref{sec:repair} (Default in Table~\ref{table:repairbenchmarks}).
The second approach involves using a `template', where we supply the constraint solver with a partial solution to the synthesis problem, based on the transitions that were localized as potentially buggy (Template in Table~\ref{table:repairbenchmarks}).

\name can solve 9/$\numrepairbenchmarks$ repair benchmarks (all in less than 1 second).
The times using either approach are comparable in most cases.
While one might expect templates to be faster, this is not always the case because the input-output specification for the repair transducer is small, but providing a template requires actually providing a partial solution, which in some cases happens to involve many constraints.

\textbf{Answer to Q2:} \name can repair transducers with varying types of bugs.

\section{Related Work}
\label{sec:related_work}
\textit{Synthesis of string transformations.}
String transformations are one of the main targets of program synthesis.
Gulwani showed they could be synthesized from input-output examples \cite{gulwani2011flashfill} and introduced the idea of using a DSL to aid synthesis.
Optician extended the DSL-based idea to synthesizing lenses \cite{bijectivelenses,symmetriclenses}, which are programs that transform between two formats.
Optician supports not only examples but also input-output types.
While DSL-based approaches provide good performance, they are also monolithic as they rely on the structure of the DSL to search efficiently.
\name does not rely on a DSL and can synthesize string transformations from complex specifications that cannot be handled by DSL-based tools.
Moreover, transducers allow applying verification techniques to the synthesized programs (e.g., checking whether two solutions are equivalent).
One limitation of transducers is that they do not have `memory', and consequently \name cannot be used for data-transformation tasks where this is required---e.g., mapping the string \texttt{Firstname Lastname} to \texttt{Lastname, Firstname}---something  Optician can do.
We remark that there exist transducer models with such capabilities~\cite{streaming-string-transducers} and our work lays the foundations to handle complex models in the future.

\vspace{1mm}\noindent\textit{Synthesis of transducers.}
Benedikt et al. studied the `bounded repair problem', where the goal is to determine whether there exists a transducer that maps strings from an input to an output type using a bounded number of edits~\cite{Benedikt11}. 
Their work was the first to identify the relation between solving such a problem and solving games, an idea we leverage in this paper.
However, their work is not implemented, cannot handle input-output examples, and therefore shies away from the source of NP-Completeness.
Hamza et al. studied the problem of synthesizing minimal non-deterministic Mealy machines (transducers where every transition outputs exactly one character), from examples \cite{minimaltransducersynthesis}.
They prove that the problem of synthesizing such transducers is NP-complete and provide an algorithm for computing minimal Mealy machines that are consistent with the input-output examples.
\name is a more general framework that incorporates new specification mechanisms, e.g., input-output types and distances, and uses them all together.
Mealy machines are also synthesized from temporal specifications in reactive synthesis and regular model checking, where they are used to represent parameterized systems~\cite{safety-parametrized-synthesis,lin2016liveness}. 
This setting is orthogonal to ours as the specification is different and the transducer is again only a Mealy machine.

The constraint encoding used in \name is inspired by the encoding presented by Daniel Neider for computing minimal separating DFA, i.e. a DFA that separates two disjoint regular languages \cite{neider2012}.
\name's use of weights and energy to specify a mean edit distance is based on energy games \cite{DBLP:conf/emsoft/ChakrabartiAHS03}, a kind of 2-player infinite game that captures the need for a player to not exceed some available resource.
One way of solving such games is by defining a \emph{progress measure} \cite{progressmeasure}. 
To determine whether a game has a winning strategy for one of the players, it can be checked whether such a progress measure exists in the game.
We showed that the search for such a progress measure can be encoded as an SMT problem.

\section{Conclusion}

We presented a technique and a tool (\name) for synthesizing different types of transducers from types, examples, and distance functions, and showed \name's ability to solve a variety of practical problems.
\name uses SMT solvers and its performance is affected by input components that result in large constraints (e.g., states in the desired transducer).
Because \name synthesizes transducers instead of programs in arbitrary DSLs, its output can be analyzed using transducer algorithms (e.g., equivalence and pre-post analysis).
%
%
%
Because of this property, our approach could be beneficial in learning invariants of string-manipulating programs, where a transducer is the formalism of choice, e.g., in the Ostrich tool \cite{ostrichATVA2020}.
%
%
In terms of improvements to our technique, aside from optimizing the SMT encoding to improve scalability, our approach could be strengthened by devising ways to effectively `guess' the number of states required for a transducer to work on the given inputs. 
We leave these directions for future work.

\section*{Acknowledgements}

This work was funded by the National Science Foundation under grants 1763871, 1750965, 1918211, and 2023222, Facebook and a Microsoft Research Faculty Fellowship.

\bibliographystyle{IEEEtran}
\bibliography{transducer.bib}

\appendices
\label{sec:appendix}

\section{Proofs}
\label{sec:proofs}

\subsection{Input-Output Examples}
\label{sec:examples_proofs}

Lemmas~\ref{lem:example_invariant} and \ref{lem:examples_correctness} show that \rone the $\position_\inputexample$ function preserves a desired invariant, which is used to show that \rtwo for the transducer $T$ encoded in a solution to the constraints, we have $T(\inputexample) = \outputexample$

\begin{lemma}
\label{lem:example_invariant}
Let $\phi$ be the set of example constraints \ref{eq:examples1} and \ref{eq:examples2} for an input-output example $\examplemap$, where $s=a_0\cdots a_n$ and  $t=b_0\cdots b_m$. Let $T$ be a transducer encoded into the functions $\vardeltastate$, $\vardeltaout$, and $\vardeltaoutlen$ as described in Section~\ref{sec:representingTransducers}. If all constraints in $\phi$ are satisfied by $\vardeltastate$, $\vardeltaout$,  $\vardeltaoutlen$, and $\position_\inputexample$, then for all $0 \leq i < n$, $0 \leq j < m$, $\position_\inputexample(i) = (j, \trnsstate)$ iff
$\extendeddeltaTout(\initstate{T}, a_0\cdots a_{i-1}) = b_0\cdots b_{j-1}$ and
 $\extendeddeltaTstate(\initstate{T}, a_0\cdots a_{i-1}) = \trnsstate $.
\end{lemma}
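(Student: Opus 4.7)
The proof plan is to proceed by induction on $i$, the number of input characters read, and to rely on the fact that $\position_\inputexample$ is an uninterpreted function (so it takes a unique value at each argument) together with the determinism of $T$ on its extended transition functions (so the right-hand side of the biconditional uniquely determines a pair $(j, \trnsstate)$). Under these two uniqueness observations, it suffices to prove one direction of the biconditional, namely that the unique value of $\position_\inputexample(i)$ must equal $(\lenfunc{\extendeddeltaTout(\initstate{T}, a_0\cdots a_{i-1})}, \extendeddeltaTstate(\initstate{T}, a_0\cdots a_{i-1}))$.

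For the base case $i=0$, Constraint~\ref{eq:examples1} forces $\position_\inputexample(0) = (0, \initstate{T})$, which matches the invariant since $\extendeddeltaTout(\initstate{T}, \varepsilon) = \varepsilon$ and $\extendeddeltaTstate(\initstate{T}, \varepsilon) = \initstate{T}$ by definition of the extended functions. For the inductive step, I would assume $\position_\inputexample(i) = (j, \trnsstate)$ with $\extendeddeltaTout(\initstate{T}, a_0\cdots a_{i-1}) = b_0\cdots b_{j-1}$ and $\extendeddeltaTstate(\initstate{T}, a_0\cdots a_{i-1}) = \trnsstate$, then instantiate Constraint~\ref{eq:examples2} with $c = a_i$ and the state $\trnsstate$ from the hypothesis. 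The antecedent of the constraint is satisfied, so its consequent yields both (a) that for every $0 \leq z < \vardeltaoutlen(\trnsstate, c)$ we have $\vardeltaout(\trnsstate, c, z) = b_{j+z}$, and (b) that $\position_\inputexample(i+1) = (j + \vardeltaoutlen(\trnsstate, c), \vardeltastate(\trnsstate, c))$.

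From (a), together with the convention in Section~\ref{sec:representingTransducers} that the output string $\deltaTout(\trnsstate, c)$ is exactly the sequence $\vardeltaout(\trnsstate, c, 0)\cdots\vardeltaout(\trnsstate, c, \vardeltaoutlen(\trnsstate, c) - 1)$, I obtain $\deltaTout(\trnsstate, c) = b_j\cdots b_{j+\vardeltaoutlen(\trnsstate, c) - 1}$. Appending this to the inductive hypothesis for the output and using the recursive definition of $\extendeddeltaTout$ gives $\extendeddeltaTout(\initstate{T}, a_0\cdots a_i) = b_0\cdots b_{j + \vardeltaoutlen(\trnsstate, c) - 1}$. Similarly, from $\vardeltastate$ encoding $\deltaTstate$ and the recursive definition of $\extendeddeltaTstate$, the state component matches.

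The main subtlety I anticipate is handling the disjunction $(\vardeltaout(\trnsstate, c, z) = b_{j+z} \vee z \geq \vardeltaoutlen(\trnsstate, c))$ cleanly: I need to argue that restricting to indices $z < \vardeltaoutlen(\trnsstate, c)$ forces the first disjunct, so that the synthesized output string coincides character-by-character with $b_j \cdots b_{j + \vardeltaoutlen(\trnsstate, c) - 1}$. The rest is routine bookkeeping with the definitions of the extended functions. The reverse direction of the iff then follows immediately: if the right-hand side of the biconditional holds for some $(j', \trnsstate')$, uniqueness of $\extendeddeltaTout$ and $\extendeddeltaTstate$ forces $(j', \trnsstate')$ to equal the $(j, \trnsstate)$ we just derived, and since $\position_\inputexample$ is a function its value there must be this unique pair.
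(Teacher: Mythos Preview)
Your proposal is correct and follows essentially the same approach as the paper: induction on $i$, with the base case handled by Constraint~\ref{eq:examples1} and the inductive step by instantiating Constraint~\ref{eq:examples2} at $c = a_i$, then splitting the disjunction to extract that the output characters match $b_j \cdots b_{j+\vardeltaoutlen(\trnsstate,c)-1}$. Your treatment of the backward direction via the uniqueness of $\position_\inputexample(i)$ and the determinism of $\extendeddeltaTout, \extendeddeltaTstate$ is in fact crisper than the paper's, which argues the converse by reconstructing $\position_\inputexample$ from the run; your observation that both sides pin down a unique pair $(j,\trnsstate)$ makes the biconditional immediate once one direction is established.
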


\begin{proof}
We first show the forward direction.

Assume that all constraints in $\phi$ are satisfied by $\vardeltastate$, $\vardeltaout$,  $\vardeltaoutlen$, and $\position_\inputexample$.
We proceed by induction on $i$.
For the base case, when $i = 0$, we have that $\position_\inputexample(0) = (0, \initstate{T})$ by Constraint~\ref{eq:examples1}.
We also know that $\extendeddeltaTout(\initstate{T}, \epsilon) = \epsilon$ and that $\extendeddeltaTstate(\initstate{T}, \epsilon) = \initstate{T}$ by definition of the extended transition function.
So we are done with the base case.

For the induction step, our induction hypothesis states exactly that $\position_\inputexample(i) = (j, \trnsstate)$ iff
$\extendeddeltaTout(\initstate{T}, a_0\cdots \allowbreak{} a_{i-1}) = b_0\cdots b_{j-1}$ and
  $\extendeddeltaTstate(\initstate{T}, a_0\cdots \allowbreak{} a_{i-1}) = \trnsstate $.
Assume we have that $\position_\inputexample(i) = (j, \trnsstate)$.
We need to show that this is the case for $\position_\inputexample(i+1) = (j',\trnsstateTwo)$ as well.
Now, since input words consist of letters in $\Sigma$, we must have that $a_i = c$ for some $c \in \Sigma$. 
Then, we must also have that $\position_\inputexample(i + 1) = (j + \vardeltaoutlen(\trnsstate,c)$, $\vardeltastate(\trnsstate, c))$ by the implication in Constraint~\ref{eq:examples2} and the inductive hypothesis.
By Constraint 2, we have that for all $0 \leq z < l$, either $(\vardeltaout(\trnsstate, c,z) = b_{j+z}$ or $z\geq  \vardeltaoutlen(\trnsstate,c)$. This means that for all $z <  \vardeltaoutlen(\trnsstate,c)$, we have $\vardeltaout(\trnsstate, c,z) = b_{j+z}$. Together with the inductive hypothesis, this implies $\extendeddeltaTout(\initstate{T}, a_0\cdots a_{i-1}) = b_0\cdots b_{j+\vardeltaoutlen(\trnsstate1,a_i)-1}$.

For showing $\extendeddeltaTstate(\initstate{T}, a_0\cdots \allowbreak{}  a_i) = \trnsstateTwo $, we can observe that this follows from the inductive hypothesis that $\extendeddeltaTstate(\initstate{T}, a_0\cdots a_{i-1}) = \trnsstate $, by the definition of $\extendeddeltaTstate$, and that by the constraints of type~\ref{eq:examples2}, which enforce that $\position_\inputexample(i + 1) = (j + \vardeltaoutlen(\trnsstate,c), \vardeltastate(\trnsstate, c))$.
So, we are done with the forward direction.

The backward direction is straightforward. If we have that  $\extendeddeltaTout(\initstate{T}, a_0\cdots \allowbreak{} a_{i-1}) = b_0\cdots b_{j-1}$ and
  $\extendeddeltaTstate(\initstate{T}, a_0\cdots \allowbreak{} a_{i-1}) = \trnsstate $, then there exists a run of the form $\initstate{T} \xrightarrow{a_0 / y_0} \trnsstate^1 
\xrightarrow{a_1 / y_1} \trnsstate^2
\ldots
\trnsstate^{i-1} \xrightarrow{a_{i-1} / y_{i-1}} \trnsstate^{i}$ such that each $y_i$ consists of $\geq 1$ characters and $y_0 \cdots y_{i-1} = b_0\cdots b_{j-1}$. 
We can use each transition to assign corresponding values to $\position_\inputexample(i)$.
\end{proof}

\begin{lemma}
\label{lem:examples_correctness}
Let $T$ be a transducer encoded into $\vardeltastate$, $\vardeltaout$,  $\vardeltaoutlen$,
and $\phi$ be the set of constraints of types \ref{eq:examples1}, \ref{eq:examples2}, and \ref{eq:examples3} for an input-output example $\examplemap$. 
Then $\vardeltastate$, $\vardeltaout$,  $\vardeltaoutlen$ can be extended to a model for  $\phi$ if and only if $T(\inputexample) = \outputexample$.
\end{lemma}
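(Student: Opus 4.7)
My plan is to reduce both directions of the biconditional to Lemma~\ref{lem:example_invariant}, which already nails down the semantics of $\position_\inputexample$ under constraints~\ref{eq:examples1} and~\ref{eq:examples2}. The only new ingredient will be the handling of the ``input fully consumed'' constraint~\ref{eq:examples3}. This cleanly separates the routine end-state argument from the substantive invariant reasoning already done in the preceding lemma.

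For the forward direction, I would start from a model of $\phi$ and observe that constraint~\ref{eq:examples3} forces $\position_\inputexample(n) = (m, \trnsstate)$ for some $\trnsstate \in \settrnsstate$, where $n = \stringlength{\inputexample}$ and $m = \stringlength{\outputexample}$. A single application of Lemma~\ref{lem:example_invariant} at indices $i=n$, $j=m$ then yields $\extendeddeltaTout(\initstate{T}, \inputexample) = \outputexample$, i.e. $T(\inputexample) = \outputexample$. This part is essentially a one-line invocation of the invariant.

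For the backward direction, given $T(\inputexample)=\outputexample$, I would construct an explicit witness $\position_\inputexample$ by reading off $T$'s run on $\inputexample$: for each $0 \le i \le n$, set $\trnsstate_i := \extendeddeltaTstate(\initstate{T}, a_0\cdots a_{i-1})$ and $j_i := \lenfunc{\extendeddeltaTout(\initstate{T}, a_0\cdots a_{i-1})}$, and define $\position_\inputexample(i) := (j_i, \trnsstate_i)$. Constraint~\ref{eq:examples1} is immediate. Constraint~\ref{eq:examples2} should follow by unfolding the recursive definitions of $\extendeddeltaTstate$ and $\extendeddeltaTout$ one character at a time, giving $\trnsstate_{i+1} = \vardeltastate(\trnsstate_i, a_i)$ and $j_{i+1} = j_i + \vardeltaoutlen(\trnsstate_i, a_i)$. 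For constraint~\ref{eq:examples3}, the assumption $T(\inputexample) = \outputexample$ gives $j_n = \lenfunc{\outputexample} = m$.

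The main obstacle I anticipate is the bookkeeping in the backward direction: I must verify that the characters $\vardeltaout(\trnsstate_i, a_i, 0), \ldots, \vardeltaout(\trnsstate_i, a_i, \vardeltaoutlen(\trnsstate_i, a_i)-1)$ really coincide with the slice $b_{j_i} \cdots b_{j_{i+1}-1}$ of $\outputexample$, which is what the conjunction inside constraint~\ref{eq:examples2} demands. This should drop out of an induction on $i$ using the fact that $\outputexample$ is precisely the concatenation of these per-transition segments, a consequence of $T(\inputexample) = \outputexample$ and the unfolding of $\extendeddeltaTout$. So while the proof is short, this alignment step is the one spot where a bit of care is warranted.
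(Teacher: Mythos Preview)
Your proposal is correct and follows essentially the same approach as the paper: the forward direction invokes Lemma~\ref{lem:example_invariant} on the pair $(n,m)$ delivered by Constraint~\ref{eq:examples3}, and the backward direction builds $\position_\inputexample$ directly from the run of $T$ on $\inputexample$. If anything, your backward direction is more explicit than the paper's, which simply asserts that the run can be used to define a suitable $\position_\inputexample$; the alignment check you flag is indeed the only place requiring care, and your sketch of how it falls out of the unfolding of $\extendeddeltaTout$ is the right one.
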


\begin{proof}

We first prove the forward direction. 
Assume that an assignment to $\vardeltastate$, $\vardeltaout$,  $\vardeltaoutlen$, and $\position_\inputexample$ satisfies all constraints in $\phi$.
By Constraint~\ref{eq:examples3}, we have that there is some state $\trnsstate \in \settrnsstate$ of the encoded transducer $T$ for which $\position_\inputexample(\stringlength{\inputexample}) = (\stringlength{\outputexample}, \trnsstate)$.
From Lemma~\ref{lem:example_invariant}, it follows that $\extendeddeltaTout(\initstate{T}, \inputexample) = \outputexample$ and
 $\extendeddeltaTstate(\initstate{T}, \inputexample) = \trnsstate $.
This is exactly what it means for $T(\inputexample) = \outputexample$.

For the backward direction, assume that for the transducer $T$ encoded in $\vardeltastate$, $\vardeltaout$,  and $\vardeltaoutlen$, we have $T(\inputexample) = \outputexample$.
Then, there exists a run $\initstate{T} \xrightarrow{a_0 / y_0} \trnsstate^1 
\xrightarrow{a_1 / y_1} \trnsstate^2
\ldots
\trnsstate^{n} \xrightarrow{a_n / y_n} \trnsstate^{n+1}$ such that $s = a_0\cdots a_n$ and $t = y_0 \cdots y_n $.
Using this run of $\inputexample$ on $T$, we can construct a definition of $\position_\inputexample$ such that all constraints in $\phi$ are satisfied.

\end{proof}

\subsection{Input-Output Types}
\label{sec:types_proofs}

Lemmas~\ref{lem:simulation_lemma} and \ref{lem:types_correctness} state that \rone the simulation relation $\simulationsym$ preserves the desired invariant, which is then used to prove that \rtwo $T$ satisfies the specification for input-output types.

\begin{lemma}
\label{lem:simulation_lemma}
Let $\phi$ be the set of input-output constraints from Equation~\ref{eq:types1} and Equation~\ref{eq:types2} for the types $\inputtype$ and $\outputtype$, and $T$ be a transducer encoded into the functions $\vardeltastate$, $\vardeltaout$, and $\vardeltaoutlen$.
If all constraints in $\phi$ are satisfied by $\vardeltastate$, $\vardeltaout$,  $\vardeltaoutlen$, and $\simulationsym$, 
then $\simulation{\srcstate}{\trnsstate}{\trgstate}$ if there exist strings $w,w'$ such that
    $\extendedtransition{\inputtype}(\initstate{\inputtype}, w) = q_{\inputtype}$, 
    $\extendeddeltaTstate(\initstate{T}, w) = \trnsstate$, 
    $\extendeddeltaTout(\initstate{T}, w) = w'$, 
    and $\extendedtransition{\outputtype}(\initstate{\outputtype}, w') = q_{\outputtype}$.
\end{lemma}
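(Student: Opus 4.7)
The plan is to prove the statement by induction on the length of the input string $w$, mirroring the inductive structure of the extended transition functions $\extendedtransition{\inputtype}$, $\extendeddeltaTstate$, $\extendeddeltaTout$, and $\extendedtransition{\outputtype}$.

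For the base case, take $w = \varepsilon$. Then $q_{\inputtype} = \initstate{\inputtype}$, $\trnsstate = \initstate{T}$, $w' = \varepsilon$, and $q_{\outputtype} = \initstate{\outputtype}$. Constraint~\ref{eq:types1} asserts exactly $\simulation{\initstate{\inputtype}}{\initstate{T}}{\initstate{\outputtype}}$, so the claim holds.

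For the inductive step, write $w = w_0 c$ with $c \in \Sigma$, and let $\srcstate' = \extendedtransition{\inputtype}(\initstate{\inputtype}, w_0)$, $\trnsstate' = \extendeddeltaTstate(\initstate{T}, w_0)$, $w_0' = \extendeddeltaTout(\initstate{T}, w_0)$, and $\trgstate' = \extendedtransition{\outputtype}(\initstate{\outputtype}, w_0')$. The induction hypothesis gives $\simulation{\srcstate'}{\trnsstate'}{\trgstate'}$. Instantiate Constraint~\ref{eq:types2} at $(\srcstate', \trnsstate', \trgstate', c)$. By the encoding of Section~\ref{sec:representingTransducers}, the actual output produced on transition $(\trnsstate', c)$ has length $z := \vardeltaoutlen(\trnsstate', c)$ and its $x$-th character is $\vardeltaout(\trnsstate', c, x)$. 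The conjuncts inside the implication for this particular $z$ then force the free variables $c_x$ to equal these output characters and force $\trgstate^0 = \trgstate'$ with $\trgstate^{x} = \vardeltaoutputtype(\trgstate^{x-1}, c_{x-1})$; since $\vardeltaoutputtype$ is hard-coded to $\transition{\outputtype}$, this precisely walks $\outputtype$ through the output string of the new transition. Hence $\trgstate^z = \extendedtransition{\outputtype}(\trgstate', \deltaTout(\trnsstate', c))$, which by the inductive hypothesis equals $\extendedtransition{\outputtype}(\initstate{\outputtype}, w_0' \cdot \deltaTout(\trnsstate',c)) = \extendedtransition{\outputtype}(\initstate{\outputtype}, \extendeddeltaTout(\initstate{T}, w))$. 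Finally, the conclusion of Constraint~\ref{eq:types2} yields $\simulation{\transition{\inputtype}(\srcstate', c)}{\vardeltastate(\trnsstate', c)}{\trgstate^z}$, whose arguments are exactly $(\extendedtransition{\inputtype}(\initstate{\inputtype}, w), \extendeddeltaTstate(\initstate{T}, w), \extendedtransition{\outputtype}(\initstate{\outputtype}, \extendeddeltaTout(\initstate{T}, w)))$, completing the induction.

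The main obstacle is the bookkeeping around the free variables $c_0, \ldots, c_{\outputbound-1}$ and $\trgstate^0, \ldots, \trgstate^{\outputbound}$ in Constraint~\ref{eq:types2}: one must argue carefully that, for the specific branch $z = \vardeltaoutlen(\trnsstate', c)$, their values in any satisfying assignment are forced to coincide with the actual output characters of $T$ and the successive states of $\outputtype$ along that output. Once this is pinned down, the inductive chaining of $\extendedtransition{\outputtype}$ through the transition's output string is routine, and the conclusion follows directly from how $\extendeddeltaTout$ is defined recursively.
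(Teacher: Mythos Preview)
Your proposal is correct and follows essentially the same approach as the paper: establish the base case via Constraint~\ref{eq:types1} and then step along the run of $T$ on $w$ using Constraint~\ref{eq:types2}. The paper's proof is considerably terser (it simply says ``it follows that we eventually reach $\simulation{\srcstate}{\trnsstate}{\trgstate}$ by the implication in Constraint~\ref{eq:types2}''), whereas you make the induction on $|w|$ and the bookkeeping for the free variables $c_x$ and $\trgstate^x$ explicit, which is a strict improvement in rigor.
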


\begin{proof}
\sloppypar Assume that there exist strings $w, w'$ such that $\extendedtransition{\inputtype}(\initstate{\inputtype}, w) = \srcstate$,
$\extendeddeltaTstate(\initstate{T}, w) \allowbreak{}= \trnsstate$, 
$\extendeddeltaTout(\initstate{T}, w) = w'$, and
$\extendedtransition{\outputtype}(\initstate{\outputtype}, w') = \trgstate$.
By Constraint~\ref{eq:types1}, we already have that $\simulation{\initstate{\inputtype}}{\initstate{T}}{\initstate{\outputtype}}$.
Since $\extendeddeltaTout(\initstate{T}, w) = w'$, we know that there is an assignment to the functions $\vardeltastate$, $\vardeltaout$, and $\vardeltaoutlen$ that encodes a run $\initstate{T} \xrightarrow{a_0 / y_0} \trnsstate^1 
\xrightarrow{a_1 / y_1} \trnsstate^2
\ldots
\trnsstate^{n} \xrightarrow{a_n / y_n} \trnsstate^{n+1}$ such that $w = a_0\cdots a_n$ and $w' = y_0 \cdots y_n $.
It follows that we eventually reach $\simulation{\srcstate}{\trnsstate}{\trgstate}$ by the implication in Constraint~\ref{eq:types2}.

\end{proof}

\begin{lemma}
\label{lem:types_correctness}
Let $T$ be a transducer encoded into the functions $\vardeltastate$, $\vardeltaout$, and $\vardeltaoutlen$ and $\phi$ be the set of input-output type constraints of the types given in Equation~\ref{eq:types1}, Equation~\ref{eq:types2} and Equation~\ref{eq:types3}. 
Then $\vardeltastate$, $\vardeltaout$, and $\vardeltaoutlen$ can be extended to a model for $\phi$ if and only if $\hoaretriple{\inputtype}{T}{\outputtype}$.
\end{lemma}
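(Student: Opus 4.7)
The plan is to prove both directions by exploiting Lemma~\ref{lem:simulation_lemma}, which already tells us that any satisfying assignment to $\simulationsym$ is a superset of the ``reachable triples'' relation induced by $T$.

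\emph{Forward direction ($\phi$ is satisfiable $\Rightarrow \hoaretriple{\inputtype}{T}{\outputtype}$).} I would take an arbitrary string $w\in\DFAlanguage{\inputtype}$ and set $w' = T(w)$, $\srcstate = \extendedtransition{\inputtype}(\initstate{\inputtype},w)$, $\trnsstate = \extendeddeltaTstate(\initstate{T},w)$, and $\trgstate = \extendedtransition{\outputtype}(\initstate{\outputtype},w')$. Because $w\in\DFAlanguage{\inputtype}$ we have $\srcstate\in\finalstates{\inputtype}$. Lemma~\ref{lem:simulation_lemma} applied to these four strings/states forces $\simulation{\srcstate}{\trnsstate}{\trgstate}$ to hold in any model of $\phi$. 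Constraint~\ref{eq:types3} then rules out $\trgstate\notin\finalstates{\outputtype}$, so $\trgstate\in\finalstates{\outputtype}$, i.e.\ $w'\in\DFAlanguage{\outputtype}$. This is exactly $\hoaretriple{\inputtype}{T}{\outputtype}$.

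\emph{Backward direction ($\hoaretriple{\inputtype}{T}{\outputtype} \Rightarrow \phi$ is satisfiable).} Assuming the functions $\vardeltastate,\vardeltaout,\vardeltaoutlen$ already encode $T$, I would define $\simulationsym$ to be exactly the ``reachable triples'' relation: $\simulation{\srcstate}{\trnsstate}{\trgstate}$ iff there exist $w,w'$ witnessing $\extendedtransition{\inputtype}(\initstate{\inputtype},w)=\srcstate$, $\extendeddeltaTstate(\initstate{T},w)=\trnsstate$, $\extendeddeltaTout(\initstate{T},w)=w'$, and $\extendedtransition{\outputtype}(\initstate{\outputtype},w')=\trgstate$. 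Then I verify the three constraint families in turn. Constraint~\ref{eq:types1} holds by taking $w=w'=\varepsilon$. For Constraint~\ref{eq:types2}, I would fix an arbitrary reachable triple $(\srcstate,\trnsstate,\trgstate)$ with witnesses $w,w'$ and a character $c\in\Sigma$; appending $c$ to $w$ gives new witnesses $wc$ and $w'\cdot\deltaTout(\trnsstate,c)$, and instantiating the free variables $c_x$ and $\trgstate^x$ with the successive characters and intermediate $\outputtype$-states reached while consuming $\deltaTout(\trnsstate,c)$ makes the right-hand conjunction true—in particular, the final triple on the right is again reachable, so $\simulationsym$ holds there by construction. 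Constraint~\ref{eq:types3} uses the hypothesis: if $\srcstate\in\finalstates{\inputtype}$ and $\simulation{\srcstate}{\trnsstate}{\trgstate}$ with witnesses $w,w'$, then $w\in\DFAlanguage{\inputtype}$, so $T(w)=w'\in\DFAlanguage{\outputtype}$ by assumption, and since $\outputtype$ is deterministic the unique state reached on $w'$ is $\trgstate$, hence $\trgstate\in\finalstates{\outputtype}$; contrapositively $\trgstate\notin\finalstates{\outputtype}$ forces $\neg\simulation{\srcstate}{\trnsstate}{\trgstate}$.

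\emph{Expected main obstacle.} The technical bookkeeping in Constraint~\ref{eq:types2} is the part that needs most care: the constraint quantifies over all output lengths $z$ and uses free variables $c_x,\trgstate^x$ that must be chosen consistently with the output produced by the transition from $\trnsstate$ on $c$ and with the traversal of $\outputtype$ on those characters. I would make this explicit by setting $z := \vardeltaoutlen(\trnsstate,c)$, $c_x := \vardeltaout(\trnsstate,c,x)$ for $0\le x < z$, and $\trgstate^x := \extendedtransition{\outputtype}(\trgstate, c_0\cdots c_{x-1})$, after which every conjunct of Constraint~\ref{eq:types2} reduces to either a definitional identity or the fact that the extended triple is reachable and therefore in $\simulationsym$. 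Determinism of $\outputtype$ is what makes this choice well-defined and is also what makes Constraint~\ref{eq:types3} equivalent to the desired output-language inclusion, as noted in the text.
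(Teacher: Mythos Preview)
Your proposal is correct and follows essentially the same approach as the paper: both directions hinge on Lemma~\ref{lem:simulation_lemma} together with Constraint~\ref{eq:types3}, and your backward direction spells out the construction of $\simulationsym$ as the reachable-triples relation that the paper only sketches in one sentence. Your explicit handling of the bookkeeping for Constraint~\ref{eq:types2} (choosing $c_x$, $\trgstate^x$, and $z$) is more detailed than the paper's proof but entirely in line with it.
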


\begin{proof}
For the forward direction, we assume that an assignment to $\vardeltastate$, $\vardeltaout$, $\vardeltaoutlen$ and $\simulationsym$ satisfies all constraints in $\phi$. 
Then, for every $\srcstate \in \finalstates{\inputtype}$, if we have that $\simulation{\srcstate}{\trnsstate}{\trgstate}$ for some $\trnsstate, \trgstate$, then it is the case that $\trgstate \in \finalstates{\outputtype}$.
From Constraint~\ref{eq:types2}, we know that for a string $w$ such that $\extendedtransition{\inputtype}(\initstate{\inputtype}, w) = q_{\inputtype}$ where $\srcstate \in \finalstates{\inputtype}$, it must be that $\simulation{\srcstate}{\trnsstate}{\trgstate}$ for at least some $\trnsstate$ and $\trgstate$.
From Lemma~\ref{lem:simulation_lemma}, it also follows that for such a string $w \in P$, there is also a string $w'$ such that $\extendeddeltaTout(\initstate{T}, w) = w'$ and $\extendedtransition{\outputtype}(\initstate{\outputtype}, w') = \trgstate$ such that $\trgstate \in \finalstates{\outputtype}$.
In other words, for every string $w \in P$, $T$ outputs a string $w' \in Q$ as desired.

For the backward direction, let $T$ be a transducer such that $\hoaretriple{\inputtype}{T}{\outputtype}$.
We can use the transition relation of $T$ to construct an assignment of $\vardeltastate$, $\vardeltaout$, $\vardeltaoutlen$ and $\simulationsym$ such that all constraints are satisfied.

\end{proof}

\subsection{Input-Output Distances}
\label{sec:distances_proofs}

Lemmas~\ref{lem:distancelemma} and \ref{lem:distance_correctness} show that a transducer $T$ that is a solution to the set of distance constraints is such that $\meaneddist{w}{T(w)} \leq \distancebound$ for all $w \in \DFAlanguage{\inputtype}$.

\begin{lemma}
Let $\phi$ be the set of input-output distance constraints from Equations~\ref{eq:distance1}, \ref{eq:distance2}, \ref{eq:distance4}, \ref{eq:distance5} and \ref{eq:distance6} for a mean edit distance of $\distancebound$, and $T$ be a transducer encoded into the functions $\vardeltastate$, $\vardeltaout$ and $\vardeltaoutlen$.
If all constraints in $\phi$ are satisfied by $\vardeltastate$, $\vardeltaout$,  $\vardeltaoutlen$, $\simulationsym$, $\varprogress$, and $\varweight$, then for all runs over $T$ of the form $
\initstate{T} \xrightarrow{a_0/y_0} \trnsstate^1 \ldots \trnsstate^{n-1} \xrightarrow{a_{n-1}/y_{n-1}} \trnsstate$, 
where $a_0\cdots a_{n-1} \in \DFAlanguage{P}$, it holds that:

\begin{displaymath}
\sum_{i = 0}^{n-1} \normalfont{\varweight(\trnsstate^i, a_i)}
\geq 0.
\end{displaymath}

\label{lem:distancelemma}
\end{lemma}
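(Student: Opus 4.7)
The plan is to use $\varprogress$ as a potential function along the run and telescope. Since $a_0 \cdots a_{n-1} \in \DFAlanguage{\inputtype}$, I would introduce the two companion runs: let $\srcstate^0 = \initstate{\inputtype}, \srcstate^1, \ldots, \srcstate^n$ be the sequence of states traversed by $\inputtype$ on $a_0 \cdots a_{n-1}$, so that $\srcstate^n \in \finalstates{\inputtype}$, and let $\trgstate^0 = \initstate{\outputtype}, \trgstate^1, \ldots, \trgstate^n$ be the states of $\outputtype$ after reading the successive output prefixes $y_0 \cdots y_{i-1}$. Applying Lemma~\ref{lem:simulation_lemma} at each prefix $a_0 \cdots a_{i-1}$ gives $\simulation{\srcstate^i}{\trnsstate^i}{\trgstate^i}$ for every $0 \leq i \leq n$.

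Next, for each $i < n$ I would instantiate Constraint~\ref{eq:distance5} with $\srcstate := \srcstate^i$, $\trnsstate := \trnsstate^i$, $\trgstate := \trgstate^i$, and $c := a_i$, selecting the free variables $c_0, \ldots, c_{\vardeltaoutlen(\trnsstate^i, a_i)-1}$ to be exactly the characters of $y_i$ and the auxiliary $\outputtype$-states to be the intermediate states visited by $\outputtype$ while consuming $y_i$ starting from $\trgstate^i$. The disjunct corresponding to $z = \vardeltaoutlen(\trnsstate^i, a_i)$ then forces the one-step inequality
\[ \varprogress(\srcstate^i, \trnsstate^i, \trgstate^i) \geq \varprogress(\srcstate^{i+1}, \trnsstate^{i+1}, \trgstate^{i+1}) - \varweight(\trnsstate^i, a_i), \]
using that $\srcstate^{i+1} = \transition{\inputtype}(\srcstate^i, a_i)$, $\trnsstate^{i+1} = \vardeltastate(\trnsstate^i, a_i)$, and that iterating $\vardeltaoutputtype$ from $\trgstate^i$ along $y_i$ lands in $\trgstate^{i+1}$.

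Telescoping these $n$ inequalities yields
\[ \varprogress(\srcstate^0, \trnsstate^0, \trgstate^0) + \sum_{i=0}^{n-1} \varweight(\trnsstate^i, a_i) \geq \varprogress(\srcstate^n, \trnsstate^n, \trgstate^n). \]
To close, I would apply Constraint~\ref{eq:distance4} to rewrite the left-most $\varprogress$-term as $0$, and Constraint~\ref{eq:distance6}, which together with $\srcstate^n \in \finalstates{\inputtype}$ and $\simulation{\srcstate^n}{\trnsstate^n}{\trgstate^n}$ forces $\varprogress(\srcstate^n, \trnsstate^n, \trgstate^n) \geq 0$. Combining both bounds gives $\sum_{i=0}^{n-1} \varweight(\trnsstate^i, a_i) \geq 0$, as required.

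The main obstacle I anticipate is the bookkeeping around the auxiliary variables in Constraint~\ref{eq:distance5}: I must justify that choosing $c_0, \ldots, c_{z-1}$ as the characters of $y_i$ and the intermediate $\outputtype$-states as those traversed on $y_i$ is a legal witness to the implication, and I must handle the degenerate case $\vardeltaoutlen(\trnsstate^i, a_i) = 0$, where no intermediate output state appears and $\trgstate^{i+1} = \trgstate^i$. This is essentially the same alignment argument used in the proof of Lemma~\ref{lem:simulation_lemma}, so it should reduce to a direct unfolding, but care is needed to ensure that the chosen $z$ matches $\vardeltaoutlen(\trnsstate^i, a_i)$ at every step so that the energy-decrease conjunct is actually triggered.
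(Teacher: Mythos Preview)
Your proposal is correct and follows essentially the same telescoping argument as the paper's proof: chain the one-step energy inequalities from Constraint~\ref{eq:distance5} along the run, then close with Constraints~\ref{eq:distance4} and~\ref{eq:distance6}. You are in fact slightly more careful than the paper in explicitly invoking Lemma~\ref{lem:simulation_lemma} to discharge the $\simulationsym$ hypothesis of Constraint~\ref{eq:distance6}, and in noting that the free variables $c_x$ are forced (not chosen) by the $\vardeltaout$ conjuncts, but the core strategy is identical.
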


\begin{proof}
Assume that all constraints in $\phi$ are satisfied by $\vardeltastate$, $\vardeltaout$,  $\vardeltaoutlen$, $\simulationsym$, $\varprogress$ and $\varweight$.
Consider an arbitrary path over of the form $
\initstate{T} \xrightarrow{a_0/y_0} \trnsstate^1 \ldots  \trnsstate^{n-1} \xrightarrow{a_{n-1}/y_{n-1}} \trnsstate$, 
where $a_0a_1\cdots a_{n-1} \in \DFAlanguage{P}$.
By Constraint~\ref{eq:distance5}, we have that

\begin{displaymath}
\begin{split}
\energyfunc{\initstate{\inputtype}}{\initstate{T}}{\initstate{\outputtype}} & \geq
\energyfunc{\srcstate^1}{\trnsstate^1}{\trgstate^1} -
\varweight(\initstate{T}, a_0) \\
\ldots \ldots & \geq  
\energyfunc{\srcstate^{n}}{\trgstate^{n}}{\trnsstate^{n}} - 
\sum_{i = 0}^{n-1} \varweight(\trnsstate^i, a_i) \\
\end{split}
\end{displaymath}

Since $w \in P$, we have that $\srcstate^{n} \in \finalstates{\inputtype}$.
By Constraint~\ref{eq:distance6}, it follows that $\energyfunc{\srcstate^{n}}{\trgstate^{n}}{\trnsstate^{n}} \geq 0$.
Therefore, $\energyfunc{\initstate{\inputtype}}{\initstate{T}}{\initstate{\outputtype}} + 
\sum_{i = 0}^{n-1} \varweight(\trnsstate^i, a_i)
\geq 0$ as well.
Since $\energyfunc{\initstate{\inputtype}}{\initstate{T}}{\initstate{\outputtype}} = 0$ by Constraint~\ref{eq:distance4}, we have that \linebreak
$\sum_{i = 0}^{n-1} \varweight(\trnsstate^i, a_i)
\geq 0$.

\end{proof}

\begin{lemma}
\label{lem:distance_correctness}
Let $T$ be a transducer encoded into the functions $\vardeltastate$, $\vardeltaout$ and $\vardeltaoutlen$, and $\phi$ be the set of input-output distance Constraints~\ref{eq:distance1}, \ref{eq:distance2},  \ref{eq:distance4}, \ref{eq:distance5} and \ref{eq:distance6}.
If $\vardeltastate$, $\vardeltaout$ and $\vardeltaoutlen$ can be extended to a model for $\phi$, then for any string $w \in \DFAlanguage{\inputtype}$, $\meaneddist{w}{T(w)} \leq \distancebound$. 
\end{lemma}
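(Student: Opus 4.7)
The plan is to chain Lemma~\ref{lem:distancelemma} together with two auxiliary observations: \emph{(a)} for each transition, the function $\vareddist(\trnsstate,c)$ really does equal $\eddistfunc{c}{\deltaTout(\trnsstate,c)}$, and \emph{(b)} the aggregate cost along a run upper-bounds the true edit distance between the input and output strings. These two facts together with the mean-aggregate-cost bound supplied by Lemma~\ref{lem:distancelemma} will yield the claim after dividing by $\lenfunc{w}$.

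First I would dispatch \emph{(a)}. Constraints~\ref{eq:distance1} and \ref{eq:distance2} perform a case split on whether the output string $\deltaTout(\trnsstate,c)$ contains the character $c$. A direct case analysis shows: if the output is empty, one edit (an insertion of $c$, or equivalently a deletion from the input side) is required and $\vareddist(\trnsstate,c)=1$ is forced; if the output has length $f>0$ and contains $c$, then $c$ can be matched and $f-1$ insertions suffice, matching $\vareddist(\trnsstate,c)=f-1$; if the output has length $f>0$ and does not contain $c$, one character must be substituted and $f-1$ inserted, giving $\vareddist(\trnsstate,c)=f$. In each case the value assigned by the constraints equals the true edit distance of the transition.

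Next I would establish \emph{(b)}: for any run $\initstate{T}\xrightarrow{a_0/y_0}\trnsstate^1\cdots\xrightarrow{a_{n-1}/y_{n-1}}\trnsstate^n$ of $T$ on $w=a_0\cdots a_{n-1}$ producing $T(w)=y_0\cdots y_{n-1}$, we have
\[
\eddistfunc{w}{T(w)} \;\leq\; \sum_{i=0}^{n-1} \eddistfunc{a_i}{y_i} \;=\; \sum_{i=0}^{n-1} \vareddist(\trnsstate^i,a_i).
\]
This follows because concatenating optimal edit scripts for each $(a_i,y_i)$ gives a valid (not necessarily optimal) edit script from $w$ to $T(w)$, so the true edit distance is at most the sum of the per-transition ones; the second equality uses \emph{(a)}.

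Finally I would assemble the bound. By Lemma~\ref{lem:distancelemma} applied to $w\in\DFAlanguage{\inputtype}$ together with the definition $\varweight(\trnsstate,c)=\distancebound-\vareddist(\trnsstate,c)$, we obtain
\[
\sum_{i=0}^{n-1}\bigl(\distancebound-\vareddist(\trnsstate^i,a_i)\bigr)\geq 0,
\quad\text{i.e.,}\quad
\sum_{i=0}^{n-1}\vareddist(\trnsstate^i,a_i)\;\leq\;n\cdot\distancebound.
\]
Combining with \emph{(b)} and dividing by $\lenfunc{w}=n$ yields $\meaneddist{w}{T(w)}=\eddistfunc{w}{T(w)}/n\leq\distancebound$, as required.

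The main obstacle is step \emph{(b)}, the aggregate-versus-true-edit-distance comparison. The paper already signals that aggregate cost can overapproximate edit distance (hence completeness is lost), so soundness hinges precisely on knowing the inequality goes in the right direction. The argument is intuitive (concatenate per-transition edit scripts) but needs to be stated carefully; once it is in place, the rest is an arithmetic rearrangement of the invariant supplied by Lemma~\ref{lem:distancelemma}.
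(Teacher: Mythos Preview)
Your proposal is correct and follows the same overall line as the paper: invoke Lemma~\ref{lem:distancelemma} to obtain $\sum_i \varweight(\trnsstate^i,a_i)\geq 0$, unfold $\varweight=\distancebound-\vareddist$, and conclude. The paper's own proof is terser and leaves your steps \emph{(a)} and \emph{(b)} implicit (the text earlier asserts that aggregate cost upper-bounds true edit distance but does not prove it inside the lemma), so your version is a more detailed rendering of the same argument rather than a different route.
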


\begin{proof}
Assume that an assignment to $\vardeltastate$, $\vardeltaout$, $\vardeltaoutlen$, $\simulationsym$, $\varprogress$ and $\varweight$ satisfy all constraints in $\phi$. 
From Lemma~\ref{lem:distancelemma} it follows that the run of any string $w \in \DFAlanguage{\inputtype}$ is such that $\sum_{i = 0}^{n-1} \varweight(\trnsstate^i, a_i)
\geq 0$, where $a_0\cdots a_{n-1} = w$ and $\trnsstate^0\cdots \trnsstate^{n-1}$ are the states in the run.
Since $\varweight(\trnsstate^i, a_i)$ is the difference between $\distancebound$ and $\vareddist(\trnsstate^i, a_i)$ at each transition, if the total difference is $\geq 0$, then it follows that  $\meaneddist{w}{T(w)} \leq \distancebound$.

\end{proof}

\section{Size of Constraint Encoding}
\label{sec:size_encoding}

\mypar{Input-Output Examples.}

There is one constraint of type~\ref{eq:examples1}, $n m |\settrgstate||\Sigma|$ constraints of type~\ref{eq:examples2}, and one constraint of type~\ref{eq:examples3} for one input-output example, where $n$ is the length of the input and $m$ is the length of the output.

A constraint of type~\ref{eq:examples1} involves 1 variable, a constraint of type~\ref{eq:examples2} involves $4 + l$ variables, and a constraint of type~\ref{eq:examples3} involves $|\settrnsstate|$ variables.

\mypar{Input-Output Types.}

There is one constraint of type~\ref{eq:types1}, $|\setsrcstate||\settrnsstate||\settrgstate||\Sigma|$ constraints of type~\ref{eq:types2}, and one constraint of type~\ref{eq:types3} for the input-output types $\inputtype$ and $\outputtype$.

A constraint of type~\ref{eq:types1} involves no variables, a constraint of type~\ref{eq:types2} involves $4 + 3l$ variables, and a constraint of type~\ref{eq:types3} involves $|\finalstates{\inputtype}| + |\settrgstate - \finalstates{\outputtype}| + |\settrnsstate|$ variables.

\mypar{Input-Output Distances.}

There are $|\settrnsstate||\Sigma|l$ constraints of type~\ref{eq:distance1} and \ref{eq:distance2}, one constraint of type~\ref{eq:distance4}, $|\setsrcstate||\settrnsstate||\settrgstate||\Sigma|$ constraints of type~\ref{eq:distance5}, and 1 constraint of type~\ref{eq:distance6} for a specified input-output distance.

Constraints of type~\ref{eq:distance1} and ~\ref{eq:distance2} involve up to $2 + l$ variables, a constraint of type~\ref{eq:distance4} involve no variables, a constraint of type~\ref{eq:distance5} involves $4 + 3l$ variables, and a constraint of type~\ref{eq:distance6} involves $|\finalstates{\inputtype}| + |\settrnsstate| + |\settrgstate|$ variables.

\section{Soundness of s-FT recovery algorithm}
\label{sec:sft_synthesis_soundness}

The following lemma shows that our s-FT recovery algorithm, described in Section~\ref{sec:symbolic_extension}, which we use to synthesize s-FTs is sound in the sense that finitizing the s-FT will result in an identical finite transducer.

\begin{lemma}
Let $T$ be an FT and let $T'$ be the corresponding s-FT obtained using the s-FT extraction algorithm.
Then we have that $T = \finite{T'}$ if we use the representative character $\witnessshort{\psi}$ for each minterm $\psi$. 
\end{lemma}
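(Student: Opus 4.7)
The claim is essentially a structural correctness statement: the s-FT extraction followed by the finitization procedure is the identity on transitions. So the plan is to show, transition by transition, that (a) every transition of $T$ gives rise to a corresponding transition in $\finite{T'}$, and (b) no spurious transitions are introduced. Because both $T$ and $\finite{T'}$ share the same set of states and initial state by construction, it suffices to prove that the transition relations coincide.

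First I would fix the representative characters: for each minterm $\psi$ of $T'$, choose as $\witnessshort{\psi}$ the unique character $c \in \Sigma$ with $\mintermpred{c} = \psi$ (well-defined because $T$'s alphabet $\Sigma$ is precisely the set of representatives from which the s-FT's predicates were built). With this choice, the forward direction is direct: take any transition $q \xrightarrow{c / y_0 \cdots y_n} q'$ in $T$. The extraction algorithm produces $q \xrightarrow{\mintermpred{c} / f_0 \cdots f_n} q'$ in $T'$. When finitizing $T'$, the construction creates, for each minterm $\psi$ satisfying $\psi \wedge \mintermpred{c}$, a transition $q \xrightarrow{\witnessshort{\psi}} q'$. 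Because minterms are pairwise disjoint and $\mintermpred{c}$ is itself a minterm, the only satisfying $\psi$ is $\mintermpred{c}$, and hence only a single input-character transition on $\witnessshort{\mintermpred{c}} = c$ is generated, matching the input side of $T$'s transition.

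Next I would verify that each output function $f_i$, evaluated at $x = c$, yields $y_i$. This reduces to a case split over the three rules of the extraction algorithm: in case 1 the identity function trivially gives $f_i(c) = c = y_i$; in case 3 the constant $y_i$ gives $f_i(c) = y_i$ by definition; the content is in case 2, where $f_i = x + \mathit{offset}(I_1, I_2)$. Here the obligation is that $c + (\min(I_2) - \min(I_1)) = y_i$, i.e., that $c$ and $y_i$ occupy the same offset within their respective (equal-size) intervals. This is the main obstacle, and I would address it by sharpening the statement of case 2 so that the rule only fires when this offset condition actually holds, with the algorithm otherwise falling through to case 3 (which always reproduces $y_i$). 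Under this refined case split, the three rules are jointly exhaustive and each preserves the output character, so $f_i(c) = y_i$ in every case.

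Finally, for the reverse inclusion, I would observe that $T'$ contains exactly one symbolic transition per transition of $T$, and the finitization step produces exactly one finite transition per (symbolic transition, intersecting minterm) pair. Since each $\mintermpred{c}$ intersects precisely one minterm of $\mintermsshort(T')$, the map from transitions of $T$ to transitions of $\finite{T'}$ is a bijection. Combined with the output-correctness argument above, this shows $\transition{T} = \transition{\finite{T'}}$, completing the proof.
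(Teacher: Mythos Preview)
Your approach matches the paper's: both argue transition by transition via the three-way case split on the extraction rules, checking that each rule reproduces the original output character upon re-finitization. You are more explicit than the paper about the input-side bijection (that each guard $\mintermpred{c}$ intersects exactly one minterm of $T'$, so no extra transitions appear), which the paper leaves implicit.

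On case~2 you flag a subtlety the paper's sketch glosses over: for $f_i(c)=c+\mathit{offset}(I_1,I_2)$ to equal $y_i$, the chosen witnesses $c\in I_1$ and $y_i\in I_2$ must sit at the same relative position in their intervals, which is not guaranteed by an arbitrary witness assignment. The paper simply asserts that re-finitizing ``with the same witness for each minterm'' recovers $y_i$, without saying why. Your proposed remedy---tightening rule~2 so it only fires when the offset condition holds, else falling through to rule~3---is a change to the algorithm rather than a proof of the lemma as literally stated; an alternative would be to stipulate that witnesses are least elements of their intervals (so the offset condition holds automatically), or to read output finitization as ``map to the witness of the minterm containing $f_i(c)$''. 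Any of these closes the gap; the paper's own sketch does not address it, so your treatment is at least as rigorous.
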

\begin{proof}
Without loss of generality, consider a transition $q \xrightarrow{c/y} q'$ where the input character is $c$ and the output character is $y$.
The proof generalizes for an output string with multiple characters.
Now, one of 3 cases apply for $c$:
\begin{enumerate}
    \item If $c = y$, then the output function $f$ that we choose is the identity function. Re-finitizing this transition results in the transition $q \xrightarrow{c/c} q'$, as expected.
    \item If $\mintermpred{c}$ and $\mintermpred{y}$ are single intervals $I_1, I_2$ of the same size, then $f = x + \textit{offset}(I_1, I_2)$. If we were to finitize the transition with the same witness for each minterm, then we get back $q \xrightarrow{c/y} q'$.
    \item Otherwise, we have a constant function $f = y$. Finitizing the transition again results in $q \xrightarrow{c/y} q'$.
\end{enumerate}
\end{proof}

\section{Constraints for synthesizing Transducers with Regular Lookahead}
\label{sec:lookahead_constraints}

In writing the lookahead constraints for a string $w$, we omit the subscript $w$ and just write $\varlookahead$ and $\lookahead$ for $\varlookahead_w$ and $\lookahead_w$.

\mypar{Input-Output Examples and Lookahead.} 
Constraint~\ref{eq:lookahead1} expresses how we compute the lookahead states for a string $w = a_0\cdots a_{n}$.
\begin{equation}
\label{eq:lookahead1}
    \varlookahead(n) = \initstate{\lookaheadaut} \wedge
    \bigwedge_{0 \leq i < n}
    \varlookahead(i) = \vardeltalookahead(\varlookahead(i + 1), a_{i+1})
\end{equation}

Constraint~\ref{eq:lookahead2} shows how the output configuration for input-output examples takes into account what the lookahead state is at any point.
Given an example $\examplemap$ such that $\inputexample=a_0\cdots a_n$ and $\outputexample = b_0\cdots b_m$, for every $0\leq i < n$, $0 \leq j \leq m$, $c\in\Sigma$, $\lookaheadstate \in \lookaheadaut$ and $q_T\in Q_T$:
\begin{equation}
\label{eq:lookahead2}
\begin{split}
     & \position_\inputexample(i) = (j, \trnsstate)
    \wedge
    a_i = c \wedge 
    \varlookahead(i) = \lookaheadstate
    \Rightarrow \\
    & 
    \quad[
    \bigwedge_{0 \leq z < l} (\vardeltaout(\trnsstate, (\lookaheadstate, c),z) {=} b_{j+z} \vee z\geq  \vardeltaoutlen(\trnsstate, (\lookaheadstate, c)) \wedge\\
    &
    \quad\position_\inputexample(i + 1) {=} (j + \vardeltaoutlen(\trnsstate, (\lookaheadstate, c)), \vardeltastate(\trnsstate, (\lookaheadstate, c)))
    ] 
\end{split}
\end{equation}

\mypar{Input-Output Types with Lookahead.} The simulation relation $\simulationsym$ needs to also consider where we are in $R$ during any possible run on $T$.
So, we extend $\simulationsym$ to include the states of $R$---i.e., $\lookaheadsimulationsym \subseteq \setsrcstate \times \settrnsstate \times \settrgstate \times \setlookaheadstate$.
Since we travel `backwards' over $\lookaheadaut$, while moving `forward' over $\inputtype$, $T$ and $Q$ in the simulation relation, the simulation relation uses the inverse transition function of $\lookaheadaut$, which we denote by $\invtransition{\lookaheadaut}$---i.e., $\invtransition{\lookaheadaut}(\lookaheadstate, c) = \{\lookaheadstate' \mid \lookaheadstate' \xrightarrow{c} \lookaheadstate\}$. 
We also introduce the corresponding uninterpreted function $\vardeltainverselookahead$, which is the inverse of $\vardeltalookahead$.
The desired invariant is that $\lookaheadsimulation{\srcstate}{\trnsstate}{\trgstate}{\lookaheadstate}$ if there exists a lookahead word $w_{\lookaheadaut} = (\lookaheadstate^0, a_0)\cdots (\lookaheadstate^n, a_n)$ and a string $w$ such that
$\extendedtransition{\inputtype}(\initstate{\inputtype}, a_0\cdots a_n) = q_{\inputtype}$, 
$\extendeddeltaTstate(\initstate{T}, w_{\lookaheadaut}) = \trnsstate$, $\extendeddeltaTout(\initstate{T}, w_{\lookaheadaut}) = w'$, 
$\extendedtransition{\outputtype}(\initstate{\outputtype}, w') = q_{\outputtype}$ and 
$\extendedinvtransition{R}(\lookaheadstate, w) = \initstate{R}$.
%

Our starting point in the simulation can be any state $\lookaheadstate \in R$.
This is because $R$ reads a string $w \in \inputtype$ backwards, so the run could start from any state in $\lookaheadaut$.
\begin{equation}
\label{eq:lookahead3}
    \bigvee_{\lookaheadstate \in R}
    \lookaheadsimulation{\initstate{\inputtype}}{\initstate{T}}{\initstate{\outputtype}}{\lookaheadstate}
\end{equation}

Then, to advance the simulation, we now use the inverse transition relation of $R$.
So, if we are at state $\lookaheadstate \in \setlookaheadstate$, then for each character $\achar \in \Sigma$, the simulation advances to all the states that we could have reached $\lookaheadstate$ from by reading $c$. For any combination of $\srcstate, \trnsstate, \trgstate, \lookaheadstate, \lookaheadstate'$ and $c \in \Sigma$, we use the free variables $c_0 \ldots, c_{l-1}$ and $\trgstate^0 \ldots, \trgstate^{l-1}$ to encode the following constraint:
\begin{equation} 
\label{eq:lookahead4}
\begin{split}
    & \lookaheadsimulation{\srcstate}{\trnsstate}{\trgstate}{\lookaheadstate} \wedge \vardeltainverselookahead(\lookaheadstate, c) = \lookaheadstate' \Rightarrow \\
    & \bigwedge_{0 \leq z \leq \outputbound}  (\vardeltaoutlen(\trnsstate, (\lookaheadstate, c)) = z \Rightarrow \\
    & \quad\quad [\bigwedge_{0 \leq x < z} \vardeltaout(\trnsstate, (\lookaheadstate, c), x)=c_x] \wedge \\
    & \quad\quad [\trgstate^0 = \trgstate \wedge \bigwedge_{1 \leq x < z} \trgstate^x =
    \vardeltaoutputtype(\trgstate^{x-1},c_{x-1})] \wedge \\
    & \quad\quad \lookaheadsimulation{\transition{\inputtype}(\srcstate, \mathit{c})}{\vardeltastate(\trnsstate, (\lookaheadstate, c))}{\trgstate^z}{\lookaheadstate'})
\end{split}
\end{equation}

Finally, constraint~\ref{eq:lookahead5} says that when we reach a final state $\srcstate \in \finalstates{\inputtype}$ and the initial state of $\lookaheadaut$ (since we are going backwards over $\lookaheadaut$), we cannot be in a non-final state of $\outputtype$.
\begin{equation}
\label{eq:lookahead5}
    \bigwedge_{\srcstate \in \finalstates{\inputtype}}
    \bigwedge_{\trgstate \notin \finalstates{\outputtype}}
    \neg \lookaheadsimulation{\srcstate}{\trnsstate}{\trgstate}{\initstate{R}}
\end{equation}

\section{Alternate Notions of Input-Output Distances}
\label{sec:distances}

It is easy to modify our constraints to support the following different notions of input-output distances.
For example, we can ask that for every input $s$, the output $T(s)$ can have edit distance at most $D$ from $s$. 
To enforce this distance, we change Constraints~\ref{eq:distance4} and \ref{eq:distance5} to the constraints:
\begin{equation}
\label{eq:distancebounded}
    \energyfunc{\initstate{\inputtype}}{\initstate{T}}{\initstate{\outputtype})} = D
\end{equation}
\begin{equation}
\label{eq:distancebounded2}
\begin{split}
    & \bigwedge_{0 \leq z < l}  (\vardeltaoutlen(\trnsstate, c) = z \Rightarrow  \\ 
    &~~ [\bigwedge_{0 \leq x < z} \vardeltaout(\trnsstate, c, x)=c_x] \wedge \\
    &~~ [\energyfunc{\transition{\inputtype}(\srcstate, c)} {\vardeltastate(\trnsstate, c)} {\extendedtransition{\outputtype}(\trgstate, c_0\cdots c_{z-1})} = \\
    &~~ \energyfunc{\srcstate}{\trnsstate}{\trgstate} -
    \vareddist(\trnsstate, c) ]
\end{split}
\end{equation}

\end{document}